\documentclass[aps,prl,twocolumn,superscriptaddress,nofootinbib,floatfix]{revtex4-2}
\usepackage[utf8]{inputenc}
\usepackage{amsmath}
\usepackage{amssymb}
\usepackage{amsthm}
\usepackage{graphicx}
\usepackage{subcaption}
\DeclareCaptionJustification{prl}{\leftskip=0pt\rightskip=0pt\parfillskip=0pt plus 1fil\relax}
\usepackage{bm}
\usepackage{physics}
\usepackage[x11names,table]{xcolor}
\usepackage[
  colorlinks=true,
  citecolor=blue,
  urlcolor=blue,
  linkcolor=blue,
  hypertexnames=false
]{hyperref}

\newtheorem{theorem}{Theorem}
\newtheorem{lemma}{Lemma}

\newtheorem{corollary}{Corollary}

\newcommand{\eps}{\varepsilon}
\newcommand{\sgn}{\operatorname{sgn}}

\newcommand{\supp}{\operatorname{supp}}
\DeclareMathOperator*{\argmin}{arg\,min}
\begin{document}

\title{Geometric View of Iterative Fixed-Node Dynamics}

\author{Pranav Kairon}
\email{pkairon2@illinois.edu}

\author{Bryan K. Clark}
\affiliation{The Anthony J. Leggett Institute for Condensed Matter Theory and IQUIST and
NCSA Center for Artificial Intelligence Innovation and Department of Physics,
University of Illinois at Urbana-Champaign, Illinois 61801, USA}

\date{\today}

\begin{abstract}
The sign-structure of a quantum many-body ground state is a central quantity in fixed-node approaches to mitigating the Fermion sign problem. For continuum electronic-structure Hamiltonians, the correct sign-structure is sufficient to compute exact ground state properties;  on the other hand lattice fixed-node approaches  retain an additional dependence on trial wave-function amplitudes which are improved upon non-optimally while preserving their sign structure.  Here we consider an iterative map obtained by repeatedly replacing the fixed-node trial state with the ground state of its associated lattice fixed-node Hamiltonian (independent of whether and
how a practical algorithm could implement this iteration).  We show the fixed points of this map are eigenstates of the Hamiltonian and reduced Hamiltonian resulting in at most one fixed point in the interior of each sign chamber.  We prove that the ground state fixed point is stable and that the boundary of the ground state sign chamber is repulsive.  This shows the  amplitude dependence beyond having the correct ground state sign-structure disappears under self-consistent iteration.  In other sign chambers, energy descent can be obstructed by the imposed sign structure; this tension leads to the iteration driving select amplitudes to zero producing `support collapse' onto chamber boundaries and corresponding to reduced Hamiltonians. Excited-state fixed points are therefore provably unstable and flow toward lower-energy boundary fixed points. We show sign chamber boundaries have a directional stability; boundary points attractive in one direction are repulsive under a sign flip.  By revealing the geometry and stability structure of iterative fixed-node dynamics, our results clarify the foundations of a central approach to the Fermion sign problem and motivate potential new algorithmic strategies.   

\end{abstract}

\maketitle

A nontrivial sign structure is one of the central features distinguishing quantum many-body states from classical probability distributions \cite{PhysRevLett.128.040403}.   
On a system of $n$ electrons, there are at least a doubly exponential  $2^{2^n}$  number of possible sign structures;  the fermion sign problem (FSP), which is the primary obstacle to simulating  quantum many-body physics, is at least partially caused by ignorance about the sector in which the ground state lives \cite{westerhout2023many,signproblemreview,signprob_majorana,signprob_science}.  For the continuum electronic structure Hamiltonian choosing the right sign structure completely removes the FSP;  fixed-node diffusion Monte Carlo (FNDMC) samples the best  state consistent with the sign structure of a trial wave-function $\Psi_T$ \cite{introtofndmc,fndmcmath1,fndmcmath2}. Interestingly, this is not the case for lattice problems clashing with the intuition that knowledge of the sign-structure is sufficient to solve the FSP. The analogous lattice fixed-node approach also samples a state with the same sign structure as an input trial wave-function $\Psi_T$ \cite{ten1995proof,lrdmc,lrdmc2}. However it is only a variationally improved (or equal) state, not the best state with that sign structure.\\ 

Lattice fixed-node methods have enabled studies of frustrated quantum magnetism, such as $J_1$-$J_2$ antiferromagnets~\cite{boninsegni1995triangular,sorella1998green,GMFCjj}, with stochastic reconfiguration providing a route to iteratively refine the fixed-node dynamics~\cite{sorella2000green}. They have also been used to investigate phase separation and insulating phases in Hubbard models~\cite{cosentini1998phase,tocchio2008backflow} and to benchmark ground-state energies~\cite{leblanc2015solutions}. More recently, variationally optimized neural-network backflow wave functions have served as trial wavefunctions for fixed-node diffusion Monte Carlo in the doped Hubbard model~\cite{loehr2025enhancing}. Recent developments have extended their relevance to quantum algorithms and complexity theory. Bravyi et al. used the construction to obtain a rapidly mixing continuous-time Markov chain that samples the ground-state probability distribution of a non-stoquastic Hamiltonian~\cite{bravyi2023rapidly}. Related constructions also design verification protocols for the local Hamiltonian problem with succinctly represented ground states~\cite{jiang2023local,waite2025complexitysuccinctstatelocal}.
\begin{figure*}[tbp]
  \centering
  \begin{subfigure}[t]{0.25\textwidth}
    \caption{}\label{fig:geometry-update}
    \includegraphics[width=\linewidth]{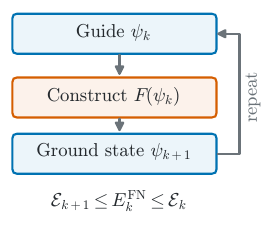}
  \end{subfigure}%
  \begin{subfigure}[t]{0.25\textwidth}
    \caption{}\label{fig:geometry-ground}
    \includegraphics[width=\linewidth]{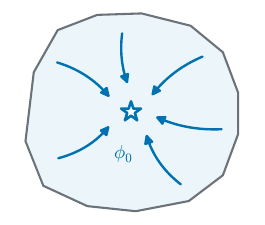}
  \end{subfigure}%
  \begin{subfigure}[t]{0.25\textwidth}
    \caption{}\label{fig:geometry-excited}
    \includegraphics[width=\linewidth]{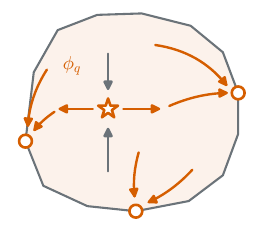}
  \end{subfigure}%
  \begin{subfigure}[t]{0.25\textwidth}
    \caption{}\label{fig:geometry-boundary}
    \includegraphics[width=\linewidth]{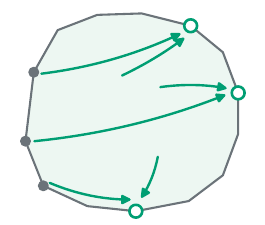}
  \end{subfigure}
  \caption{Schematic overview of iterative fixed-node dynamics. \textbf{(a)} Each guiding wavefunction defines a fixed-node Hamiltonian whose ground state becomes the next guide; the energies are non-increasing in $k$. \textbf{(b--d)} Starting from the \textbf{(b)} \textit{ground state sign chamber:} Iterative fixed node flows to the true ground state from any point. 
\textbf{(c)} \textit{excited state sign chamber}: Iterative fixed node generically has repulsive directions (orange) and may have attractive directions (gray); different initial trial wavefunction can approach distinct boundary fixed points. \textbf{(d)} \textit{a chamber with no eigenstate}: Trajectories starting near a generic point (gray dot) approach a boundary fixed point. Stars mark interior fixed points and open circles mark boundary limits. A fixed point on the surviving configurations is an eigenvector of the restricted Hamiltonian $H_{SS}$.}
  \label{fig:geometry}
\end{figure*}

While the sign-structure of $\Psi_T$  is not sufficient for lattice fixed-node, it is true that the output is at least as good as $\Psi_T$.  This leads to a natural question:  what is the flow of a process where one iterates FNDMC taking the sampled output wave-function as the input trial wave-function for the next iteration.  Notice that in such a map the sign structure stays preserved over the iterations.  In this work, we will address this question defining and looking at the flow of an iterative fixed node.  We focus on the nature of this flow independent of whether and how a practical algorithm could implement this iteration.\\ 

Formally, we can define our fixed node iteration at step $k+1$ as (schematic in Fig.\ref{fig:geometry-update})
\begin{equation}
    \psi_{k+1}=T(\psi_k):=\phi_0\!\left(F(\psi_k)\right),
    \label{eq:intro-map}
\end{equation}
where $F(\psi_k)$ is the lattice fixed-node Hamiltonian built from the trial wave-function $\psi_k$ and $\phi_0(F)$ is its normalized nondegenerate ground state for $F$.  
Notice that by construction the sign structure of each $\psi_k$ is fixed to that of  $\psi_0$.  
We study the fixed points of this map, their stability, and how convergence depends on the $\psi_0$ including determining whether the iterative process converges to the best state for a given sign-structure. 
The iterative process of Eq.\eqref{eq:intro-map} is non-increasing with respect to energy while preserving the sign structure.  Naively, one might anticipate that this then flows toward a fixed point wave-function with that sign structure.  This need not be the case as it's possible that the state flows towards a ``boundary''  of the sign-structure where some amplitudes go to zero (and hence without sign).  In fact, we show the fixed points of this iteration are both eigenstates of $H$  (which are at most one per sign-structure) as well as eigenstates of  principal submatrices of $H$, the latter which results in states with zero amplitude outside that respective submatrix.  Having identified the fixed points, we then consider the stability of these fixed points. 
This is done by showing that iterative lattice fixed-node projection given by Eq.~\eqref{eq:intro-map}, can be formulated as a self-consistent-field method for a nonlinear eigenvalue problem with eigenvector dependence (NEPv) \cite{jarlebring2014inverse,cai2018eigenvector,bai2022sharp,henning2025gross}. As shown in Fig.\ref{fig:geometry-ground} the ground state is perturbatively stable, moreover the boundaries of this chamber are repulsive. In contrast, Fig.~\ref{fig:geometry-excited} shows excited-state fixed points are unstable: the $q$-th excited state has precisely $q$ unstable directions associated with lower-energy states, directly relating the dynamics to its Morse index. 
By considering the `boundaries' of a given sign-structure, we further map out the geometry of the flow.  We find that the boundaries of the ground-state sign chamber are all repulsive;  therefore every initial $\psi_0$  with the sign structures of the correct ground state flow towards the true ground state under iteration.  States that start with a sign structure in an excited state sign chamber may flow towards the excited state fixed point for some time but eventually flow towards the walls converging towards an eigenstate of the reduced Hamiltonian.  Interestingly, we show that attractive walls in one sign chamber are always repulsive in another and so if we were to relax the dynamics to allow sign flips one could tunnel naturally from one chamber to another without being caught in the reduced Hamiltonian. Fig.\ref{fig:geometry-boundary} illustrates that initializations inside a generic sign chamber flows toward attractive boundary fixed points through zeroing out of certain configurations. Thus, despite the explicit amplitude dependence of a single lattice fixed-node construction, repeated self-consistency reconstructs the ground-state amplitudes from the correct sign structure alone.

\paragraph{Background}
Let $H=\sum_{a=1}^m H_a$ be an $n$-qubit Hamiltonian, real symmetric matrix in the computational basis $\{\ket{x}:{x\in V} \}$ indexed by a finite set $V$, and a unique non-degenerate  ground state $\phi_0(H)=\sum_x \phi_x \ket{x}$ such that $\phi_x\ne0$ for every $x$. We work with real Hamiltonians although ground states of gapped non-degenerate complex Hamiltonians on $n$ qubits can always be mapped to the ground state of a real Hamiltonian on $(n+1)$ qubits\cite{bravyi2023rapidly}. 
We define the ground state energy $E_0$ and spectral gap $\Delta_H := E_1-E_0 > 0$. Given the violating-edge set
$S_+(v)=\bigl\{\{x,z\}:x<z,\ v_xH_{xz}v_z>0\bigr\}$, the lattice fixed-node Hamiltonian is
\begin{equation}
\begin{aligned}
F(v)_{xz}&=
\begin{cases}
0, & \{x,z\}\in S_+(v),\\
H_{xz}, & \{x,z\}\notin S_+(v),
\end{cases}
&&x\ne z,\\
F(v)_{xx}&=H_{xx}+\sum_{(x,y) \in S_+(v)}H_{xy}\frac{v_y}{v_x}.
\end{aligned}
\label{eq:fixed-node}
\end{equation}
The Hamiltonian $F$ is sign-free with the deleted off-diagonal matrix elements  restored through a diagonal compensation. We use two identities from Ref. \cite{ten1995proof}:
\begin{equation}
  F(v)v=Hv,
  \qquad
  F(v)\succeq H.
  \label{eq:structural-identities}
\end{equation}
We define a chamber as the set of states with a fixed sign structure set by the sign vector $s$ and the chamber walls by states with zero amplitude values and consistent signs otherwise. Within a chamber, the violating-edge set is fixed and $F$ is smooth with respect to varying amplitudes. If $F$ is irreducible, $T$ preserves the guide's signs and full support at each iterative step. We call an approach to a limit $\psi_\star$ with proper support $S=\{x:\psi_\star(x)\ne0\}\subsetneq V$ \emph{support collapse}: amplitudes on $R=V\setminus S$ vanish asymptotically. The surviving Hamiltonian $H_{SS}$ restricts both indices of $H$ to $S$, and $F_S$ denotes its fixed-node construction. A boundary fixed point is a state whose nonzero restriction $\psi_S=\psi_\star|_S$ is fixed under the reduced map $\psi_S\mapsto\phi_0(F_S(\psi_S))$. The regularization used near zero amplitudes is described in Supplemental Material, Sec.~\ref{sec:regular}.
\begin{figure*}[t]
  \centering
  \includegraphics[width=\textwidth]{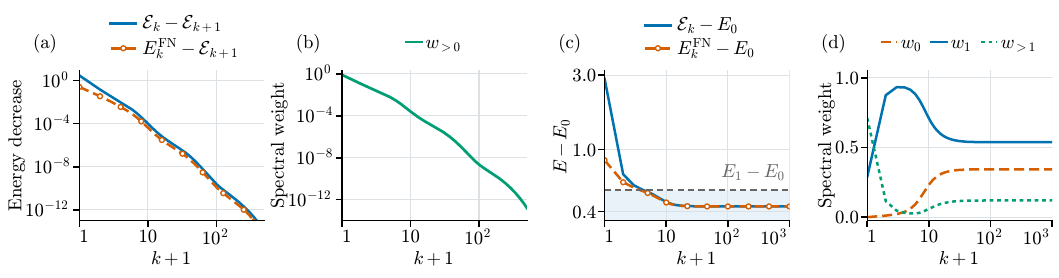}
  \caption{Fixed-node iteration in disordered $J_1$-$J_2$ chains: Initial trial wavefunction has \textbf{(a,b)} ground-state signs, $L=12$; \textbf{(c,d)} first-excited-state signs, $L=10$. Panel \textbf{(a)} shows the physical energy decrease $\mathcal{E}_k-\mathcal{E}_{k+1}$ and the gap $E_k^{\rm FN}-\mathcal{E}_{k+1}$, where $\mathcal{E}_k=\langle\psi_k,H\psi_k\rangle$ and $E_k^{\rm FN}=\lambda_0(F(\psi_k))$. Panel \textbf{(c)} shows $\mathcal{E}_k-E_0$ and $E_k^{\rm FN}-E_0$ crossing below the dashed reference $E_1-E_0$ into the shaded region. Panels \textbf{(b,d)} show spectral weights $w_n=|\langle\phi_n,\psi_k\rangle|^2$, with $w_{>r}=\sum_{n>r}w_n$. Decaying $w_{>0}$ shows $\psi_k \rightarrow \phi_0$, while growing $w_0$ in \textbf{(d)} reveals eigenstates acquire a finite spectral weight.}
  \label{fig:numerics}
\end{figure*}
\paragraph{Fixed points and walls} 
At a full-support fixed point $\psi_\star$, Eq.~\eqref{eq:structural-identities} gives $H\psi_\star=F(\psi_\star)\psi_\star=E_\star\psi_\star$, where $E_\star=\lambda_0(F(\psi_\star))$. From this it follows that all eigenstates of $H$ are fixed points for the map $T$.

For a boundary fixed point, the same identity on $S$ gives $H_{SS}\psi_S=\lambda\psi_S$, with $\lambda=\lambda_0(F_S(\psi_S))$. Its zero extension is an eigenstate of $H$ exactly when $H_{RS}\psi_S=0$ (Supplemental Material, Sec.~\ref{sec:boundary-supp}, Theorem~\ref{thm:boundary-supp}). Thus the full-support eigenstates of $H$ and the eigenstates of $H_{SS}$ with full support on $S$ exhaust the fixed points of the full and reduced maps, subject to this simplicity assumption. It should be noted that  ground-state of the reduced Hamiltonian $\psi_S$ can be written as a linear combination of eigenstates of the full Hamiltonian $H$. The perturbative stability of fixed points against perturbations of the non-zero amplitudes is analyzed below and expanded upon in the Supplemental Material, Secs.~\ref{sec:linearization-supp} and~\ref{sec:morse-supp}.

Since support collapse depends on the sign chamber from which the boundary is approached, a boundary fixed point may be stable under perturbations of its nonzero amplitudes but unstable when a small amplitude is introduced on a missing configuration. Let $b$ be a normalized reduced fixed point with one missing component, $b_r=0$, and approach it from either adjacent chamber with guiding wavefunction proportional to $b+\sigma z\ket{r}$, where $z>0$ and $\sigma=\pm1$. We define the transverse multiplier $\kappa_\sigma$ by $\sigma T(b+\sigma z\ket{r})_r=\kappa_\sigma z+O(z^2)$. If the reduced fixed-node ground state is simple and both deleted and retained incident couplings contribute, exchanging the two sides gives $\kappa_\sigma\kappa_{-\sigma}=1$ (Supplemental Material, Sec.~\ref{sec:wall-reciprocity-supp}). Hence a wall that attracts from one side, $0<\kappa_\sigma<1$, repels from the other. Such attraction can drive support collapse in an incorrect-sign chamber.

For normalized iterates, we define the physical energy $\mathcal E_k=\langle\psi_k,H\psi_k\rangle$ and the fixed-node energy $E_k^{\mathrm{FN}}=\lambda_0(F(\psi_k))$. The variational principle and Eq.~\eqref{eq:structural-identities} give
\begin{equation}
 \mathcal E_{k+1}\le E_k^{\mathrm{FN}}\le\mathcal E_k.
 \label{eq:energy-descent}
\end{equation}
In the ground-state chamber, $\phi_0$ is the only interior fixed point, since no other eigenstate of $H$ can share its signs and remain orthogonal to it. The contradiction proof in Supplemental Material, Sec.~\ref{sec:ground-boundary-supp}, shows that every reduced boundary fixed point admits an inward energy-lowering perturbation. Equation~\eqref{eq:energy-descent} therefore prevents such a point from attracting a full neighborhood of interior guiding wavefunction. At a regular wall with one missing component, the same argument gives $\kappa_\sigma>1$ on the ground-state side. If the fixed-node operators remain irreducible and the trajectory is uniformly interior, $\inf_{k,x}|\psi_{k,x}|>0$, energy descent implies convergence to $\phi_0$ (Supplemental Material, Sec.~\ref{sec:correct-sign-supp}, Theorem~\ref{thm:correct-sign-supp}).

\paragraph{Stable/Unstable fixed points}
To determine whether a fixed point attracts nearby states, we linearize one iteration of the map. This is the same local construction used for a self-consistent-field iteration of an eigenvector-dependent nonlinear eigenvalue problem~\cite{cai2018eigenvector,bai2022sharp}. Let $\psi_\star$ be a normalized full-support fixed point, write $F_\star=F(\psi_\star)$, and denote its fixed-node energy by $E_\star$. Assume that $\psi_\star$ is the nondegenerate ground state of $F_\star$. The identity $F(\psi_\star)\psi_\star=H\psi_\star$ then gives $H\psi_\star=E_\star\psi_\star$. Since $\psi_\star$ has full support, sufficiently small perturbations preserve its signs and violating-edge set, so $F$ varies smoothly nearby. A nearby normalized guiding wavefunction has the form $\psi=\psi_\star+\eta+O(\|\eta\|^2)$, where $\langle\psi_\star,\eta\rangle=0$. Thus the first-order perturbation $\eta$ lies in the tangent space $\mathcal T_{\psi_\star}=\{\eta:\langle\psi_\star,\eta\rangle=0\}$, with orthogonal projector $P_\perp=I-|\psi_\star\rangle\langle\psi_\star|$, where $I$ is the identity. On this space, we define the shifted physical and fixed-node operators $A=P_\perp(H-E_\star I)P_\perp$ and $B=P_\perp(F_\star-E_\star I)P_\perp$. The quadratic form $\langle\eta,A\eta\rangle$ gives the physical energy change to second order. The operator $B$ is positive definite because $E_\star$ is the nondegenerate lowest eigenvalue of $F_\star$. The fixed-node inequality $F_\star\succeq H$ also gives $A\preceq B$. We take $B^{-1}$ only on $\mathcal T_{\psi_\star}$. Let $\delta F$ denote the first-order change in $F$ induced by $\eta$. If we differentiate $F(\psi)\psi=H\psi$ we get: $\delta F\psi_\star=(H-F_\star)\eta$. If $\eta'$ is the first-order change in the updated ground state, projecting the perturbed ground-state eigenvalue equation onto $\mathcal T_{\psi_\star}$ gives $B\eta'=-P_\perp\delta F\psi_\star=(B-A)\eta$. Solving for $\eta'$ determines the Jacobian $J_\star=DT_{\psi_\star}$, the derivative of $T$ at the fixed point. The tangent perturbation $\eta_k$ at iteration $k$ therefore obeys
\begin{equation}
 \begin{aligned}
 \eta_{k+1}&=J_\star\eta_k+O(\|\eta_k\|^2),\\
 J_\star&=B^{-1}(B-A)=I-B^{-1}A,
 \end{aligned}
 \label{eq:linearized-map}
\end{equation}
where $I$ now acts on the tangent space. This self-consistent-field linearization~\cite{cai2018eigenvector,bai2022sharp} is derived in detail in Supplemental Material, Sec.~\ref{sec:linearization-supp}. To determine the eigenvalues of $J_\star$, consider a nonzero tangent vector $u$ satisfying the generalized eigenvalue equation $Au=\mu Bu$. Eq. ~\eqref{eq:linearized-map} gives $J_\star u=(1-\mu)u$, so the same mode has Jacobian eigenvalue $\lambda=1-\mu$. All generalized eigenvalues $\mu$ are real because $A$ and $B$ are symmetric and $B$ is positive definite. To first order, the amplitude of this mode is multiplied by $\lambda$ at each step. The fixed point is locally attractive if the spectral radius $\rho(J_\star)=\max_\lambda|\lambda|$, the largest absolute Jacobian eigenvalue, is less than one. Any mode with $|\lambda|>1$ is unstable. For the unique ground state $\phi_0$ of $H$, take $\psi_\star=\phi_0$ and $E_\star=E_0$. Then $A$ is positive definite, so $0\prec A\preceq B$. Hence every generalized eigenvalue satisfies $0<\mu\le1$, so every Jacobian eigenvalue lies in $[0,1)$. The ground-state fixed point is therefore locally attractive. Now take $\psi_\star=\phi_q$ to be a full-support excited-state fixed point with nondegenerate energy $E_q$, and let $q$ count the eigenvalues of $H$ below $E_q$, including multiplicity. Here $A$ and $B$ are evaluated at $\psi_\star=\phi_q$ and $E_\star=E_q$, with the same assumption that $\phi_q$ is a nondegenerate ground state of $F(\phi_q)$. The eigenvalues of $A$ are the energy differences between the other eigenstates of $H$ and $\phi_q$, so exactly $q$ are negative. Because $B$ is positive definite, the generalized eigenvalue problem has the same number of negative eigenvalues as $A$ (see Theorem~\ref{thm:morse-supp} in the Supplemental Material). These $q$ values $\mu<0$ give $q$ Jacobian eigenvalues $\lambda>1$. For each expanding mode, $\langle u,Au\rangle=\mu\langle u,Bu\rangle<0$, so the perturbation lowers the physical energy to second order. The remaining generalized eigenvalues lie in $(0,1]$, since $E_q$ is nondegenerate and $A\preceq B$. Thus the excited-state fixed point has exactly $q$ unstable directions, and all remaining directions contract. If the fixed-node ground state is degenerate, $B$ is singular on the tangent space and this linearization does not apply. The analysis also assumes an exact ground-state solution at every fixed-node update, how stochastic or approximate solutions affect convergence remains open.

\begin{figure*}[t]
  \centering
  \begin{subfigure}[t]{0.30\textwidth}
    \caption{}\label{fig:stability-radius}
    \includegraphics[width=\linewidth]{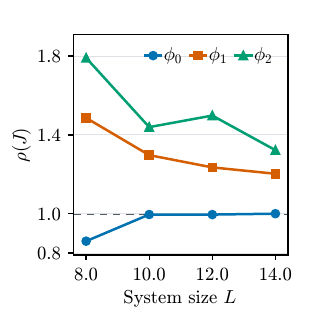}
  \end{subfigure}\hfill
  \begin{minipage}[t]{0.69\textwidth}
    \begin{subfigure}[t]{0.5\linewidth}
      \caption{}\label{fig:stability-ground}
    \end{subfigure}%
    \begin{subfigure}[t]{0.5\linewidth}
      \caption{}\label{fig:stability-excited}
    \end{subfigure}
    \includegraphics[width=\linewidth]{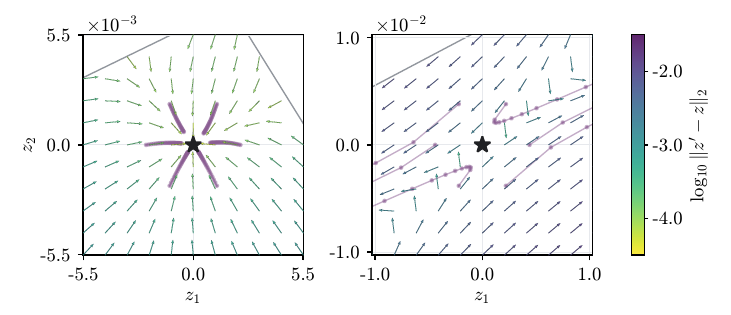}
  \end{minipage}
  \caption{Perturbative stability of the iterative fixed-node. \textbf{(a)} Jacobian spectral radius near the ground and first two excited states of disordered $J_1$-$J_2$ chains; the dashed line marks $\rho=1$. \textbf{(b,c)} Projected one-step updates near $\phi_0$ and $\phi_1$ of an eight-site triangular strip, with $\rho(J)=0.951$ and $1.404$, respectively. Arrows have uniform length and show the direction of $z'-z$; the color of the arrow corresponds to the magnitude, $\log_{10}\|z'-z\|_2$. Each panel's multiplier applies to both coordinates. Purple curves trace successive fixed-node iterates projected into the same axes. Gray lines mark component-zero boundaries in the displayed plane; stars mark the reference eigenstates. The pooled out-of-plane displacement fractions are $25.6\%$ and $33.9\%$, respectively.}
  \label{fig:phase-portraits-supp}
\end{figure*}

\paragraph{Numerics}
We will now numerically evaluate the iterative fixed node map and measure spectral radius of the Jacobian explicitly in the zero-magnetization sector of spin-$1/2$ models with Hamiltonian
\begin{equation}
 H=\sum_{(i,j)\in\mathcal E}J_{ij}\left(S_i^xS_j^x+S_i^yS_j^y
       +\Delta S_i^zS_j^z\right)+\sum_i h_iS_i^z,
 \label{eq:numerical-hamiltonian}
\end{equation}
where $\mathcal E$ is the set of coupled site pairs, $J_{ij}$ are exchange couplings, $\Delta$ is the exchange anisotropy, and the longitudinal fields $h_i$ are drawn independently from a Gaussian distribution with zero mean and standard deviation $h_{\rm dis}$. Fig.\ref{fig:numerics} compares two initial sign choices for open $J_1$-$J_2$ chains with $J_1=1$, $J_2=0.78$, $\Delta=1$, and $h_{\rm dis}=0.1$. Each run uses one disorder realization, random positive amplitudes assigned the signs of the target eigenstate, and a sign-preserving threshold guiding wavefunction with $\eps=10^{-8}$, defined in the Supplemental Material. For the $L=12$ ground-state signs [Fig.\ref{fig:numerics} (a,b)], the physical and fixed-node energies converge to $E_0$ and approach one another as the spectral weight concentrates on $\phi_0$. For the $L=10$ first-excited-state signs [Fig.\ref{fig:numerics} (c,d)], the initial guide is also projected orthogonal to $\phi_0$. A finite ground-state weight nevertheless develops, the energy falls approximately $0.12$ below $E_1$, and some amplitudes approach the regularization scale. The iteration thus lowers the energy through support collapse while retaining the initial signs. Fig.~\ref{fig:phase-portraits-supp}(a) calculates the spectral radius of the Jacobian $\rho(J)$ for chains with $L=8-14$, the same exchange couplings, and $h_{\rm dis}=0.05$. Its spectral radius is below one for the ground state and above one for the first two excited states at every displayed size.

Fig.\ref{fig:phase-portraits-supp}(b,c) show one-step updates near the ground and first excited states of an eight-site open triangular strip with $J_{ij}=1$, $\Delta=1$, and $h_{\rm dis}=0.12$, using the square-root guide with $\eps=10^{-8}$. We display the displacement under one fixed-node update by projecting onto a two-dimensional tangent plane. Let the columns of $U=(u_1,u_2)$ be orthonormal tangent directions at the reference state $\psi_\star$, and let $z=(z_1,z_2)$ denote coordinates in their plane. We use normalized guiding wavefunction $\psi(z)=(\psi_\star+Uz)$. For each sampled $z$ that preserves the reference signs, we compute $\widehat\psi=T(\psi(z))$ and define the full displacement in the same projective coordinates by $d(z)=\frac{\widehat\psi}{\langle\psi_\star,\widehat\psi\rangle}-\psi_\star-Uz.$ The updated plane coordinates are $z'=U^{\mathsf T}[\widehat\psi/\langle\psi_\star,\widehat\psi\rangle-\psi_\star]$, so the projected displacement is $z'-z=U^{\mathsf T}d(z)$. Arrows are normalized to a common length and show its direction; their color gives $\log_{10}\|z'-z\|_2$. The gray lines satisfy $\psi_\star(x)+[Uz]_x=0$ and purple curves follow a few trajectories explicitly, before projection onto the plane. The ground-state plane spans the two dominant Jacobian modes; the excited-state plane contains the lower-energy state $\phi_0$ and the dominant Jacobian mode. The current projection omits $d_\perp(z)=(I-UU^{\mathsf T})d(z)$, for better visualization. In the Supplemental Material, Sec.\ref{sec:exact-solve}, we construct an exactly solvable non-stoquastic Hamiltonian on a ring. The  iterative fixed-node dynamics reduces to a scalar map within a one-parameter family of positive initial trial wavefunction (Supplemental Material, Fig.~\ref{fig:solvable-ring-iteration}). Within this family, the amplitude bias decreases monotonically in magnitude to zero, recovering the uniform ground state.
\paragraph{Conclusion}
We have formulated an iterative lattice fixed-node procedure as a nonlinear eigenvalue problem with eigenvector dependency. The resulting fixed points divide naturally into two classes. Interior fixed points are exact eigenstates of the full Hamiltonian. Boundary fixed points are exact eigenstates of principal submatrices $H_{SS}$ defined by the surviving support $S$. In the ground-state sign chamber, we prove that walls are all repulsive, hence all the trajectories flow to ground state. Furthermore, in all sign chambers we show that points on walls at any boundary are attractive on one side, are repulsive on the other.  
This picture also suggests an algorithmic extension beyond the fixed-chamber map: once an amplitude falls below a chosen tolerance, one may remove that configuration temporarily or allow its inherited sign to be reassigned, thereby permitting motion between adjacent sign chambers. At a full-support fixed point with a simple fixed-node ground state, the tangent-space Jacobian determines local stability. In the ground-state sign sector, the fixed point is locally stable,  whereas excited states are provably unstable. Our work can be extended in many directions. SHDMC feeds stochastic fixed-node information back into a parametrized continuum trial state to improve its amplitudes and nodes~\cite{reboredo2009self,reboredo2012many,caffarel2026self}; here we isolate a deterministic lattice amplitude-repair mechanism within a fixed sign chamber. It would be interesting to develop similar algorithms for iterative lattice fixed-node. Conversely, Westerhout \textit{et al.} reconstruct signs from approximate amplitudes~\cite{westerhout2023many}, whereas our convergence theorem reconstructs the exact ground-state amplitudes from the exact signs under the stated hypotheses. This suggests a possible mechanism, where we bootstrap sign reconstruction with fixed-node amplitude updates. We explore this possibility for small systems in the Supplemental material, Sec. \ref{sec:glassysign}. The deterministic map studied here represents the ideal limit where each fixed-node ground state is computed exactly, whether the combined iteration converges with sampled configurations, finite walker populations, and approximate projection remains an open question. \\
\textit{Acknowledgments---} We gratefully acknowledge discussions with Daniel Belkin. 

\bibliographystyle{apsrev4-2}
\bibliography{fndmc}

@article{bravyi2023rapidly,
  doi = {10.22331/q-2023-11-07-1173},
  url = {https://doi.org/10.22331/q-2023-11-07-1173},
  title = {A rapidly mixing {M}arkov chain from any gapped quantum many-body system},
  author = {Bravyi, Sergey and Carleo, Giuseppe and Gosset, David and Liu, Yinchen},
  journal = {{Quantum}},
  issn = {2521-327X},
  publisher = {{Verein zur F{\"{o}}rderung des Open Access Publizierens in den Quantenwissenschaften}},
  volume = {7},
  pages = {1173},
  month = nov,
  year = {2023}
}

@article{jiang2023local,
  title = {Local Hamiltonian Problem with Succinct Ground State is {MA}-Complete},
  author = {Jiang, Jiaqing},
  journal = {PRX Quantum},
  volume = {6},
  number = {2},
  pages = {020312},
  year = {2025},
  doi = {10.1103/PRXQuantum.6.020312},
  url = {https://doi.org/10.1103/PRXQuantum.6.020312}
}

@article{ten1995proof,
  title = {Proof for an upper bound in fixed-node Monte Carlo for lattice fermions},
  author = {{ten Haaf}, D. F. B. and {van Bemmel}, H. J. M. and {van Leeuwen}, J. M. J. and {van Saarloos}, W. and Ceperley, D. M.},
  journal = {Phys. Rev. B},
  volume = {51},
  number = {19},
  pages = {13039--13045},
  year = {1995},
  doi = {10.1103/PhysRevB.51.13039},
  url = {https://doi.org/10.1103/PhysRevB.51.13039}
}

@article{boninsegni1995triangular,
  title = {Ground state of a triangular quantum antiferromagnet: Fixed-node Green-function Monte Carlo study},
  author = {Boninsegni, Massimo},
  journal = {Phys. Rev. B},
  volume = {52},
  number = {21},
  pages = {15304--15311},
  year = {1995},
  doi = {10.1103/PhysRevB.52.15304},
  url = {https://doi.org/10.1103/PhysRevB.52.15304}
}

@article{sorella1998green,
  title = {Green Function Monte Carlo with Stochastic Reconfiguration},
  author = {Sorella, Sandro},
  journal = {Phys. Rev. Lett.},
  volume = {80},
  number = {20},
  pages = {4558--4561},
  year = {1998},
  doi = {10.1103/PhysRevLett.80.4558},
  url = {https://doi.org/10.1103/PhysRevLett.80.4558}
}

@article{sorella2000green,
  title = {Green function Monte Carlo with stochastic reconfiguration: An effective remedy for the sign problem},
  author = {Sorella, Sandro and Capriotti, Luca},
  journal = {Phys. Rev. B},
  volume = {61},
  number = {4},
  pages = {2599--2612},
  year = {2000},
  doi = {10.1103/PhysRevB.61.2599},
  url = {https://doi.org/10.1103/PhysRevB.61.2599}
}

@misc{waite2025complexitysuccinctstatelocal,
      title={On the Complexity of the Succinct State Local Hamiltonian Problem}, 
      author={Gabriel Waite and Karl Lin},
      year={2025},
      eprint={2509.25821},
      archivePrefix={arXiv},
      primaryClass={quant-ph},
      url={https://arxiv.org/abs/2509.25821}, 
}

@book{kato1995perturbation,
  title={Perturbation Theory for Linear Operators},
  author={Kato, Tosio},
  year={1995},
  publisher={Springer}
}

@article{westerhout2023many,
  title={Many-body quantum sign structures as non-glassy Ising models},
  author={Westerhout, Tom and Katsnelson, Mikhail I and Bagrov, Andrey A},
  journal={Communications Physics},
  volume={6},
  number={1},
  pages={275},
  year={2023},
  publisher={Nature Publishing Group UK London},
  doi={10.1038/s42005-023-01388-6}
}

@article{henning2025gross,
  title={The Gross--Pitaevskii equation and eigenvector nonlinearities: numerical methods and algorithms},
  author={Henning, Patrick and Jarlebring, Elias},
  journal={SIAM Review},
  volume={67},
  number={2},
  pages={256--317},
  year={2025},
  publisher={SIAM},
  doi={10.1137/22M1516324}
}

@article{jarlebring2014inverse,
  title={An inverse iteration method for eigenvalue problems with eigenvector nonlinearities},
  author={Jarlebring, Elias and Kvaal, Simen and Michiels, Wim},
  journal={SIAM Journal on Scientific Computing},
  volume={36},
  number={4},
  pages={A1978--A2001},
  year={2014},
  doi={10.1137/130910014}
}

@article{cai2018eigenvector,
  title={On an eigenvector-dependent nonlinear eigenvalue problem},
  author={Cai, Yunfeng and Zhang, Lei-Hong and Bai, Zhaojun and Li, Ren-Cang},
  journal={SIAM Journal on Matrix Analysis and Applications},
  volume={39},
  number={3},
  pages={1360--1382},
  year={2018},
  doi={10.1137/17M115935X}
}

@article{bai2022sharp,
  title={Sharp estimation of convergence rate for self-consistent field iteration to solve eigenvector-dependent nonlinear eigenvalue problems},
  author={Bai, Zhaojun and Li, Ren-Cang and Lu, Ding},
  journal={SIAM Journal on Matrix Analysis and Applications},
  volume={43},
  number={1},
  pages={301--327},
  year={2022},
  publisher={SIAM},
  doi={10.1137/20M136606X}
}

@article{johnson1995principal,
  title = {Principal Submatrices, Geometric Multiplicities, and Structured Eigenvectors},
  author = {Johnson, Charles R. and Kroschel, Brenda K.},
  journal = {SIAM J. Matrix Anal. Appl.},
  volume = {16},
  number = {3},
  pages = {1004--1012},
  year = {1995},
  doi = {10.1137/S0895479894266568},
  url = {https://doi.org/10.1137/S0895479894266568}
}

@article{haemers1995interlacing,
  title = {Interlacing eigenvalues and graphs},
  author = {Haemers, Willem H.},
  journal = {Linear Algebra Appl.},
  volume = {226--228},
  pages = {593--616},
  year = {1995},
  doi = {10.1016/0024-3795(95)00199-2},
  url = {https://doi.org/10.1016/0024-3795(95)00199-2}
}

@article{reboredo2009self,
  title={Self-healing diffusion quantum Monte Carlo algorithms: Direct reduction of the fermion sign error in electronic structure calculations},
  author={Reboredo, Fernando A and Hood, Randolph Q and Kent, Paul RC},
  journal={Phys. Rev. B},
  volume={79},
  number={19},
  pages={195117},
  year={2009},
  publisher={APS},
  doi={10.1103/PhysRevB.79.195117}
}

@article{reboredo2012many,
  title={Many-body calculations of low-energy eigenstates in magnetic and periodic systems with self-healing diffusion Monte Carlo: Steps beyond the fixed phase},
  author={Reboredo, Fernando A.},
  journal={J. Chem. Phys.},
  volume={136},
  number={20},
  pages={204101},
  year={2012},
  doi={10.1063/1.4711023}
}

@article{PhysRevLett.128.040403,
  title = {Ruling Out Real-Valued Standard Formalism of Quantum Theory},
  author = {Chen, Ming-Cheng and Wang, Can and Liu, Feng-Ming and Wang, Jian-Wen and Ying, Chong and Shang, Zhong-Xia and Wu, Yulin and Gong, M. and Deng, H. and Liang, F.-T. and Zhang, Qiang and Peng, Cheng-Zhi and Zhu, Xiaobo and Cabello, Ad\'an and Lu, Chao-Yang and Pan, Jian-Wei},
  journal = {Phys. Rev. Lett.},
  volume = {128},
  issue = {4},
  pages = {040403},
  numpages = {5},
  year = {2022},
  month = {Jan},
  publisher = {American Physical Society},
  doi = {10.1103/PhysRevLett.128.040403},
  url = {https://link.aps.org/doi/10.1103/PhysRevLett.128.040403}
}

@article{cosentini1998phase,
  title = {Phase separation in the two-dimensional {Hubbard} model: A fixed-node quantum {Monte Carlo} study},
  author = {Cosentini, A. C. and Capone, M. and Guidoni, L. and Bachelet, G. B.},
  journal = {Phys. Rev. B},
  volume = {58},
  number = {22},
  pages = {R14685},
  year = {1998},
  doi = {10.1103/PhysRevB.58.R14685},
  url = {https://doi.org/10.1103/PhysRevB.58.R14685}
}

@article{tocchio2008backflow,
  title = {Role of backflow correlations for the nonmagnetic phase of the {$t$--$t'$ Hubbard} model},
  author = {Tocchio, Luca F. and Becca, Federico and Parola, Alberto and Sorella, Sandro},
  journal = {Phys. Rev. B},
  volume = {78},
  number = {4},
  pages = {041101},
  year = {2008},
  doi = {10.1103/PhysRevB.78.041101},
  url = {https://doi.org/10.1103/PhysRevB.78.041101}
}

@article{leblanc2015solutions,
  title = {Solutions of the Two-Dimensional {Hubbard} Model: Benchmarks and Results from a Wide Range of Numerical Algorithms},
  author = {LeBlanc, J. P. F. and Antipov, Andrey E. and Becca, Federico and Bulik, Ireneusz W. and Chan, Garnet Kin-Lic and Chung, Chia-Min and Deng, Youjin and Ferrero, Michel and Henderson, Thomas M. and Jim{\'e}nez-Hoyos, Carlos A. and Kozik, E. and Liu, Xuan-Wen and Millis, Andrew J. and Prokof'ev, N. V. and Qin, Mingpu and Scuseria, Gustavo E. and Shi, Hao and Svistunov, B. V. and Tocchio, Luca F. and Tupitsyn, I. S. and White, Steven R. and Zhang, Shiwei and Zheng, Bo-Xiao and Zhu, Zhenyue and Gull, Emanuel},
  journal = {Phys. Rev. X},
  volume = {5},
  number = {4},
  pages = {041041},
  year = {2015},
  doi = {10.1103/PhysRevX.5.041041},
  url = {https://doi.org/10.1103/PhysRevX.5.041041}
}

@misc{loehr2025enhancing,
  title = {Enhancing Neural Network Backflow},
  author = {Loehr, Kieran and Clark, Bryan K.},
  year = {2025},
  eprint = {2510.26906},
  archivePrefix = {arXiv},
  primaryClass = {cond-mat.str-el},
  doi = {10.48550/arXiv.2510.26906},
  url = {https://arxiv.org/abs/2510.26906}
}

@article{GMFCjj,
  title = {Green's function Monte Carlo combined with projected entangled pair state approach to the frustrated ${J}_{1}\text{\ensuremath{-}}{J}_{2}$ Heisenberg model},
  author = {Lin, He-Yu and Guo, Yibin and He, Rong-Qiang and Xie, Z. Y. and Lu, Zhong-Yi},
  journal = {Phys. Rev. B},
  volume = {109},
  issue = {23},
  pages = {235133},
  numpages = {11},
  year = {2024},
  month = {Jun},
  publisher = {American Physical Society},
  doi = {10.1103/PhysRevB.109.235133},
  url = {https://link.aps.org/doi/10.1103/PhysRevB.109.235133}
}

@article{introtofndmc,
    author = {Annarelli, Alfonso and Alfè, Dario and Zen, Andrea},
    title = {A brief introduction to the diffusion Monte Carlo method and the fixed-node approximation},
    journal = {The Journal of Chemical Physics},
    volume = {161},
    number = {24},
    pages = {241501},
    year = {2024},
    month = {12},
    issn = {0021-9606},
    doi = {10.1063/5.0232424},
    url = {https://doi.org/10.1063/5.0232424},
}

@article{fndmcmath1,
    author = {Caffarel, Michel and Del Moral, Pierre and de Montella, Luc},
    title = {On the mathematical foundations of diffusion Monte Carlo},
    journal = {Journal of Mathematical Physics},
    volume = {66},
    number = {1},
    pages = {013301},
    year = {2025},
    month = {01},
    issn = {0022-2488},
    doi = {10.1063/5.0202800},
    url = {https://doi.org/10.1063/5.0202800},
}

@article{fndmcmath2,
author = {Canc{\`e}s, Eric and Jourdain, Benjamin and Leli{\`e}vre, Tony},
  title = {Quantum Monte Carlo simulations of fermions: a mathematical analysis of the fixed-node approximation},
journal = {Mathematical Models and Methods in Applied Sciences},
volume = {16},
number = {09},
pages = {1403-1440},
year = {2006},
doi = {10.1142/S0218202506001583},

URL = { https://doi.org/10.1142/S0218202506001583},
}

@article{lrdmc,
  title = {Diffusion Monte Carlo Method with Lattice Regularization},
  author = {Casula, Michele and Filippi, Claudia and Sorella, Sandro},
  journal = {Phys. Rev. Lett.},
  volume = {95},
  issue = {10},
  pages = {100201},
  numpages = {4},
  year = {2005},
  month = {Sep},
  publisher = {American Physical Society},
  doi = {10.1103/PhysRevLett.95.100201},
  url = {https://link.aps.org/doi/10.1103/PhysRevLett.95.100201}
}

@article{lrdmc2,
  title = {Beyond the locality approximation in the standard diffusion Monte Carlo method},
  author = {Casula, Michele},
  journal = {Phys. Rev. B},
  volume = {74},
  issue = {16},
  pages = {161102(R)},
  numpages = {4},
  year = {2006},
  month = {Oct},
  publisher = {American Physical Society},
  doi = {10.1103/PhysRevB.74.161102},
  url = {https://link.aps.org/doi/10.1103/PhysRevB.74.161102}
}

@article{signproblemreview,
   author = "Li, Zi-Xiang and Yao, Hong",
   title = "Sign-Problem-Free Fermionic Quantum Monte Carlo: Developments and Applications", 
   journal= "Annual Review of Condensed Matter Physics",
   year = "2019",
   volume = "10",
   number = "Volume 10, 2019",
   pages = "337-356",
   doi = "https://doi.org/10.1146/annurev-conmatphys-033117-054307",
   url = "https://www.annualreviews.org/content/journals/10.1146/annurev-conmatphys-033117-054307",
   publisher = "Annual Reviews",
   issn = "1947-5462",
   type = "Journal Article",
  }

@article{signprob_majorana,
  title = {Majorana Positivity and the Fermion Sign Problem of Quantum Monte Carlo Simulations},
  author = {Wei, Z. C. and Wu, Congjun and Li, Yi and Zhang, Shiwei and Xiang, T.},
  journal = {Phys. Rev. Lett.},
  volume = {116},
  issue = {25},
  pages = {250601},
  numpages = {5},
  year = {2016},
  month = {Jun},
  publisher = {American Physical Society},
  doi = {10.1103/PhysRevLett.116.250601},
  url = {https://link.aps.org/doi/10.1103/PhysRevLett.116.250601}
}

@article{signprob_science,
author = {Dominik Hangleiter  and Ingo Roth  and Daniel Nagaj  and Jens Eisert },
title = {Easing the Monte Carlo sign problem},
journal = {Science Advances},
volume = {6},
number = {33},
pages = {eabb8341},
year = {2020},
doi = {10.1126/sciadv.abb8341},
URL = {https://www.science.org/doi/abs/10.1126/sciadv.abb8341},
}

@article{caffarel2026self,
  title={Self-Healing Diffusion Monte Carlo applied to a simple fermionic model: A critical assessment of the method},
  author={Caffarel, Michel and Pinar, Manon and Scemama, Anthony},
  journal={arXiv preprint arXiv:2609.01301},
  year={2026}
}

\clearpage

\onecolumngrid
\appendix
\setcounter{secnumdepth}{1}
\setcounter{figure}{0}
\renewcommand{\thefigure}{S\arabic{figure}}

\begin{center}
{\bfseries Supplemental Material for}\\[0.5em]
{\bfseries\textit{Geometric View of Iterative Fixed-Node Dynamics}}\\[1.0em]

Pranav Kairon$^1$ and Bryan K. Clark$^{1}$\\[0.5em]

{\small
$^1$The Anthony J. Leggett Institute for Condensed Matter Theory and IQUIST and
NCSA Center for Artificial Intelligence Innovation and Department of Physics,
University of Illinois at Urbana-Champaign, Illinois 61801, USA
}\\[0.5em]
\end{center}

\begin{quote}
This Supplemental Material provides the details underlying the results in the Letter. We first re-derive the identities that govern the fixed-node map and then establish its convergence and local stability properties. We also explain the regularization used when components of the state approach zero and present additional numerical tests. We work on a finite configuration space $V$, with the Hamiltonian $H$ represented by a real symmetric matrix. Unless stated otherwise, all vectors are real and normalized.
\end{quote}

\tableofcontents

\section{Fixed-node identities and variational inequalities}

For a full-support vector $v$, we define the unordered violating-edge set
\begin{equation}
 S_+(v)=\bigl\{\{x,z\}:x<z,\ v_xH_{xz}v_z>0\bigr\}.
 \label{eq:supp-violating}
\end{equation}
The fixed-node operator is
\begin{equation}
\bigl(F(v)\bigr)_{xz}=
\begin{cases}
0, & x\ne z,\ \{x,z\}\in S_+(v),\\
H_{xz}, & x\ne z,\ \{x,z\}\notin S_+(v),\\
H_{xx}+\displaystyle\sum_{\substack{y\ne x\\ \{x,y\}\in S_+(v)}}
H_{xy}\dfrac{v_y}{v_x}, & x=z.
\end{cases}
\label{eq:supp-fixed-node}
\end{equation}
It is invariant under nonzero scalar rescaling of the guiding wavefunction.

\begin{lemma}
\label{lem:Fv-Hv}
For every full-support $v$,
\begin{equation}
 F(v)v=Hv.
 \label{eq:Fv-Hv}
\end{equation}
\end{lemma}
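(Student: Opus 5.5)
The plan is a direct row-by-row comparison of $F(v)v$ and $Hv$. Fix a configuration $x\in V$. Because $v$ has full support, $v_x\neq0$, so the diagonal compensation in Eq.~\eqref{eq:supp-fixed-node} is well defined. I would split the sum over $z$ in $(F(v)v)_x$ into three groups: the diagonal term $z=x$, the off-diagonal $z$ with $\{x,z\}\notin S_+(v)$, and the off-diagonal $z$ with $\{x,z\}\in S_+(v)$.

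The first step is to handle the two off-diagonal groups. For $\{x,z\}\notin S_+(v)$, $F(v)_{xz}=H_{xz}$, so these terms reproduce the corresponding terms of $(Hv)_x$ exactly. For $\{x,z\}\in S_+(v)$, $F(v)_{xz}=0$, so these terms contribute nothing to $(F(v)v)_x$. Compared with $(Hv)_x$, the only thing missing is therefore $\sum_{z\neq x,\ \{x,z\}\in S_+(v)}H_{xz}v_z$.

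The second step is the diagonal term. It equals $F(v)_{xx}v_x=H_{xx}v_x+\sum_{y\neq x,\ \{x,y\}\in S_+(v)}H_{xy}\frac{v_y}{v_x}v_x$. The factor $v_x$ cancels, leaving $H_{xx}v_x+\sum_{\{x,y\}\in S_+(v)}H_{xy}v_y$. This is exactly the diagonal contribution of $(Hv)_x$ plus the missing violating-edge contribution. Adding the three groups gives $(F(v)v)_x=\sum_z H_{xz}v_z=(Hv)_x$ for every $x$, which proves the identity.

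The only point needing care, which I expect to be the one mild obstacle, is bookkeeping. $S_+(v)$ is defined with unordered pairs and $x<z$, while the sum in row $x$ runs over all neighbours $y$ with $\{x,y\}\in S_+(v)$, whether $y<x$ or $y>x$. I would check that the index set of the compensating sum in the diagonal entry matches exactly the set of deleted off-diagonal entries in row $x$. Both are $\{y\neq x:\{x,y\}\in S_+(v)\}$, and the violation condition $v_xH_{xy}v_y>0$ is symmetric in $x,y$ because $H$ is symmetric. No sign information beyond membership in $S_+(v)$ is used, so the identity holds for any full-support $v$.
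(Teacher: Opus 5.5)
Your proof is correct and follows essentially the same route as the paper: both compare the two products row by row and observe that each deleted violating-edge term $H_{xy}v_y$ is restored exactly by the diagonal compensation $H_{xy}(v_y/v_x)v_x$, while the non-violating off-diagonal terms are unchanged. Your extra check that the compensation in row $x$ runs over all $y$ with $\{x,y\}\in S_+(v)$, regardless of whether $y<x$ or $y>x$, is exactly the bookkeeping the paper leaves implicit.
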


\begin{proof}
For a fixed row $x$, every deleted term $H_{xy}v_y$ is restored by the diagonal contribution $H_{xy}(v_y/v_x)v_x$. All nonviolating off-diagonal terms are unchanged, so the two matrix-vector products agree componentwise.
\end{proof}

\begin{lemma}
\label{lem:loewner}
For every full-support $v$,
\begin{equation}
 F(v)-H
 =\sum_{\{x,z\}\in S_+(v)}
 \frac{H_{xz}}{v_xv_z}
 (v_z|x\rangle-v_x|z\rangle)
 (v_z\langle x|-v_x\langle z|)
 \succeq0.
 \label{eq:penalty}
\end{equation}
\end{lemma}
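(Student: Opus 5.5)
The plan is to verify the matrix identity entrywise and then read off positive semidefiniteness from the explicit rank-one decomposition. First, compute $F(v)-H$ directly from Eq.~\eqref{eq:supp-fixed-node}. Off the diagonal, it vanishes on non-violating pairs. On a violating pair $\{x,z\}\in S_+(v)$ it equals $-H_{xz}$. On the diagonal, its $(x,x)$ entry is $\sum_{y:\{x,y\}\in S_+(v)} H_{xy}v_y/v_x$. So $F(v)-H$ is a sum over violating edges of $2\times2$ blocks supported on the coordinates $\{x,z\}$. Each block is
\begin{equation}
 M_{xz}=\begin{pmatrix} H_{xz}\,v_z/v_x & -H_{xz}\\ -H_{xz} & H_{xz}\,v_x/v_z\end{pmatrix}.
\end{equation}
Here I will use the symmetry $H_{xz}=H_{zx}$. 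Summing the blocks reproduces each diagonal entry exactly once per incident violating edge, because the condition defining $S_+$ is symmetric in $x$ and $z$.

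Second, I will check that each block equals the rank-one term in Eq.~\eqref{eq:penalty}. Expand
\begin{equation}
 \frac{H_{xz}}{v_xv_z}(v_z|x\rangle-v_x|z\rangle)(v_z\langle x|-v_x\langle z|).
\end{equation}
Its $(x,x)$ entry is $H_{xz}v_z^2/(v_xv_z)=H_{xz}v_z/v_x$, and its $(z,z)$ entry is $H_{xz}v_x/v_z$. Its off-diagonal entries are $-H_{xz}v_xv_z/(v_xv_z)=-H_{xz}$. This matches $M_{xz}$, which establishes the equality in Eq.~\eqref{eq:penalty}.

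Third, positivity follows directly. Each term is a scalar multiple of the projector-like operator $w w^{\mathsf T}$ with $w=v_z|x\rangle-v_x|z\rangle$, and that operator is positive semidefinite. The prefactor is $H_{xz}/(v_xv_z)=H_{xz}v_xv_z/(v_xv_z)^2$, which is strictly positive exactly because $\{x,z\}\in S_+(v)$ means $v_xH_{xz}v_z>0$. A nonnegative combination of positive semidefinite matrices is positive semidefinite, so $F(v)\succeq H$. As a consistency check, $w$ is orthogonal to $v$, since $\langle w,v\rangle=v_zv_x-v_xv_z=0$. This recovers Lemma~\ref{lem:Fv-Hv}.

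There is no real obstacle. The only care needed is bookkeeping: each unordered edge must contribute to both diagonal entries $x$ and $z$ with no double counting, which is why the sum runs over unordered pairs with $x<z$.
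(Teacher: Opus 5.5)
Your proof is correct and is essentially the paper's argument. Like the paper, you match each rank-one term entrywise against the deleted off-diagonal element $-H_{xz}$ and the two diagonal compensations $H_{xz}v_z/v_x$ and $H_{xz}v_x/v_z$, then conclude positive semidefiniteness from $H_{xz}/(v_xv_z)>0$ on violating edges; your remark that $w=v_z|x\rangle-v_x|z\rangle$ is orthogonal to $v$ additionally recovers $F(v)v=Hv$ from the same decomposition.
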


\begin{proof}
For each violating edge, the rank-one term in Eq.~\eqref{eq:penalty} contributes $H_{xz}v_z/v_x$ and $H_{xz}v_x/v_z$ to the two diagonal entries and $-H_{xz}$ to the two off-diagonal entries. These are exactly the changes made by Eq.~\eqref{eq:supp-fixed-node}. Moreover,
$H_{xz}/(v_xv_z)>0$ because $v_xH_{xz}v_z>0$. Every term is therefore positive semidefinite.
\end{proof}

Equations~\eqref{eq:Fv-Hv} and \eqref{eq:penalty} imply, for any normalized $u$,
\begin{equation}
 \langle u,Hu\rangle\le \langle u,F(v)u\rangle,
 \qquad
 \langle v,F(v)v\rangle=\langle v,Hv\rangle.
 \label{eq:variational-identities}
\end{equation}
We define physical energy as $\mathcal E_k=\langle\psi_k,H\psi_k\rangle$, and the fixed-node energy as $E_k^{\mathrm{FN}}=\lambda_0(F(\psi_k))$. The variational principle and Lemma~\ref{lem:Fv-Hv} give
\begin{align}
 E_k^{\mathrm{FN}}
 &=\langle\psi_{k+1},F(\psi_k)\psi_{k+1}\rangle\nonumber\\
 &\le\langle\psi_k,F(\psi_k)\psi_k\rangle
 =\mathcal E_k.
\end{align}
Lemma~\ref{lem:loewner} gives the other inequality,
\begin{equation}
 \mathcal E_{k+1}
 \le\langle\psi_{k+1},F(\psi_k)\psi_{k+1}\rangle
 =E_k^{\mathrm{FN}}.
\end{equation}
Hence $\mathcal E_{k+1}\le\mathcal E_k$ and
$E_{k+1}^{\mathrm{FN}}\le\mathcal E_{k+1}\le E_k^{\mathrm{FN}}$.
Both sequences are bounded below by $\lambda_0(H)$ and hence the two limits converge to the same value.

\begin{lemma}[Approximate ground-state convergence]
\label{lem:approximate-ground}
Let $A_j\to A$ in operator norm, with all matrices real symmetric and $\lambda_0(A)$ simple. If normalized $x_j$ satisfy
\begin{equation}
 x_j^\mathsf{T}A_jx_j-\lambda_0(A_j)\longrightarrow0,
\end{equation}
then, after consistent global sign choices, $x_j\to\phi_0(A)$.
\end{lemma}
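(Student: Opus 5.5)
The plan is to reduce the statement to a quantitative gap estimate for $A$ and then transfer it to the $A_j$ by perturbation. Write $\lambda_0=\lambda_0(A)$ and $\phi_0=\phi_0(A)$, and let $\Delta:=\lambda_1(A)-\lambda_0(A)>0$, which is positive because $\lambda_0(A)$ is simple.

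First, for any normalized $x$, decompose $x=\alpha\phi_0+\beta w$ with $w\perp\phi_0$, $\|w\|=1$ and $\alpha^2+\beta^2=1$. Expanding in the eigenbasis of $A$ gives the standard bound
\begin{equation}
 x^\mathsf{T}Ax-\lambda_0\ \ge\ \Delta\,\beta^2=\Delta\bigl(1-\langle\phi_0,x\rangle^2\bigr).
\end{equation}
This is the key inequality: a small excess energy forces a small angle to $\phi_0$.

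Next, pass from $A_j$ to $A$. Let $\epsilon_j=\|A_j-A\|\to0$. For normalized $x_j$ we have $|x_j^\mathsf{T}A_jx_j-x_j^\mathsf{T}Ax_j|\le\epsilon_j$. By Weyl's inequality, $|\lambda_0(A_j)-\lambda_0(A)|\le\epsilon_j$; this also follows directly from the Rayleigh-quotient characterization of the minimum. Hence
\begin{equation}
 x_j^\mathsf{T}Ax_j-\lambda_0(A)\ \le\ \bigl(x_j^\mathsf{T}A_jx_j-\lambda_0(A_j)\bigr)+2\epsilon_j\longrightarrow0.
\end{equation}
Combining this with the gap inequality gives $1-\langle\phi_0,x_j\rangle^2\le\Delta^{-1}\bigl(x_j^\mathsf{T}Ax_j-\lambda_0\bigr)\to0$, so $|\langle\phi_0,x_j\rangle|\to1$.

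Finally, fix the signs. Choose the sign of each $x_j$ (or of the reference $\phi_0$) so that $\langle\phi_0,x_j\rangle\ge0$; this is the ``consistent global sign choice'' in the statement. Then
\begin{equation}
 \|x_j-\phi_0\|^2=2-2\langle\phi_0,x_j\rangle\to0.
\end{equation}
Since the problem is real, a sign is the only ambiguity, so no phase issues arise.

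No step is a genuine obstacle. The only point needing care is the meaning of the sign convention. Fixing the sign relative to $\phi_0(A)$ suffices. If the intended convention is instead relative to $\phi_0(A_j)$, the same argument applied to $A_j$ shows $\phi_0(A_j)\to\phi_0(A)$ up to sign, because $\lambda_1(A_j)-\lambda_0(A_j)\ge\Delta-2\epsilon_j>0$ for large $j$. The conclusion then follows by the triangle inequality.
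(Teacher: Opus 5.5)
Your proof is correct, but it runs the gap argument on the limit $A$, whereas the paper runs it on $A_j$.

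The paper proceeds in three steps:
\begin{enumerate}
 \item It decomposes $x_j$ against $u_j=\phi_0(A_j)$.
 \item It uses that $\lambda_1(A_j)-\lambda_0(A_j)$ is eventually at least half the limiting gap, which gives $\langle u_j,x_j\rangle\to1$.
 \item It cites continuity of simple eigenvectors under symmetric perturbations (Kato) to get $u_j\to\phi_0(A)$.
\end{enumerate}

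You instead move the excess energy onto $A$ at a cost of $2\|A_j-A\|$, via the Rayleigh-quotient bound plus Weyl. You then use only the gap $\Delta$ of $A$. This avoids both the uniform-gap step for $A_j$ and the cited perturbation theorem. In fact, applying your argument to $x_j=\phi_0(A_j)$, whose excess is zero, proves the eigenvector continuity the paper imports. Your route also gives an explicit rate, $1-\langle\phi_0(A),x_j\rangle^2\le\Delta^{-1}\bigl(x_j^{\mathsf T}A_jx_j-\lambda_0(A_j)+2\|A_j-A\|\bigr)$.

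The paper's route gives something extra: direct control of the angle between $x_j$ and $\phi_0(A_j)$. In the application in Theorem~\ref{thm:correct-sign-supp}, $\phi_0(A_j)$ is the next iterate $\psi_{k_j+1}$, though the lemma as stated does not need this.

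Your bound $\lambda_1(A_j)-\lambda_0(A_j)\ge\Delta-2\epsilon_j$ is needed only to make $\phi_0(A_j)$ well defined in your alternative sign convention. The main argument does not use it.
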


\begin{proof}
Let $u_j=\phi_0(A_j)$. For sufficiently large $j$, the spectral gap $\Delta(A_j)$ is at least half the positive limiting gap. Write
$x_j=\alpha_ju_j+\sqrt{1-\alpha_j^2}\,w_j$, with $w_j\perp u_j$, choosing the sign so that $\alpha_j\ge0$. Then
\begin{equation}
 x_j^\mathsf{T}A_jx_j-\lambda_0(A_j)
 \ge(1-\alpha_j^2)\Delta(A_j),
\end{equation}
so $\alpha_j\to1$. By continuity of a simple eigenvector under symmetric perturbations we get $u_j\to\phi_0(A)$~\cite{kato1995perturbation}.
\end{proof}

\section{Convergence in the ground-state sign chamber}
\label{sec:correct-sign-supp}

\begin{theorem}[Correct-sign convergence]
\label{thm:correct-sign-supp}
Let $H$ have a unique normalized ground state $\phi_0$ with full support. Assume that
\begin{enumerate}
 \item $\sgn(\psi_0)=\sgn(\phi_0)$;
 \item every $F(\psi_k)$ is irreducible and has a unique ground state; and
 \item the exact iteration remains uniformly interior,
 $\inf_{k,x}|\psi_{k,x}|>0$, equivalently every accumulation point has full support.
\end{enumerate}
Then
\begin{equation}
 \psi_k\longrightarrow\phi_0,
 \qquad
 \mathcal E_k\downarrow E_0,
 \qquad
 E_k^{\mathrm{FN}}\downarrow E_0,
 \label{eq:correct-sign-limit}
\end{equation}
where $E_0=\lambda_0(H)$.
\end{theorem}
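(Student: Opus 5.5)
The plan is a compactness argument: pass to accumulation points of $(\psi_k)$, show each one is a full-support fixed point of $T$ in the ground-state chamber, and then use uniqueness of interior fixed points there.

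\emph{Step 1: preservation of signs and extraction of limits.} Since every $F(\psi_k)$ is irreducible and sign-free (its off-diagonal entries in the $\sgn(\psi_k)$-gauge are nonpositive), Perron--Frobenius gives $\sgn(\psi_{k+1})=\sgn(\psi_k)$. By induction, $\sgn(\psi_k)=\sgn(\phi_0)$ for all $k$.

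By the energy descent established above, $\mathcal E_k$ and $E_k^{\mathrm{FN}}$ are monotone, bounded below by $E_0$, and converge to a common limit $\mathcal E_\infty$. Normalized vectors form a compact set, so take any accumulation point $\psi_\star=\lim_j\psi_{k_j}$. Hypothesis 3 gives $\psi_\star$ full support, and by closure its signs equal $\sgn(\phi_0)$. Hence $S_+(\psi_{k_j})=S_+(\psi_\star)$ for large $j$, and the diagonal compensations $v_y/v_x$ are continuous on this uniformly interior set. Therefore $F(\psi_{k_j})\to F_\star:=F(\psi_\star)$ in operator norm.

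\emph{Step 2: the accumulation point is a fixed point.} This is the main obstacle, because the iterates need not converge and I only control energies. I would use the descent identities along the subsequence:
\[
\langle\psi_{k_j},F(\psi_{k_j})\psi_{k_j}\rangle-\lambda_0(F(\psi_{k_j}))=\mathcal E_{k_j}-E_{k_j}^{\mathrm{FN}}\to0 .
\]
I need $\lambda_0(F_\star)$ to be simple. The cleanest route is Perron--Frobenius applied to $F_\star$ itself. Is $F_\star$ irreducible? Its off-diagonal pattern coincides with that of $F(\psi_{k_j})$ for large $j$, since the violating-edge set is frozen. Irreducibility depends only on this pattern, so $F_\star$ is irreducible and its ground state is simple.

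Lemma~\ref{lem:approximate-ground} then gives $\psi_{k_j}\to\phi_0(F_\star)$, so $\psi_\star=\phi_0(F_\star)=T(\psi_\star)$. The sign is fixed by positivity in the chamber gauge.

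\emph{Step 3: identification and full convergence.} By Lemma~\ref{lem:Fv-Hv}, $H\psi_\star=F_\star\psi_\star=\lambda_0(F_\star)\psi_\star$, so $\psi_\star$ is an eigenvector of $H$ with the signs of $\phi_0$. Then $\langle\phi_0,\psi_\star\rangle=\sum_x|\phi_{0,x}||\psi_{\star,x}|>0$. Eigenvectors of distinct eigenvalues are orthogonal and $E_0$ is simple, so $\psi_\star=\phi_0$. Thus every accumulation point equals $\phi_0$, and compactness yields $\psi_k\to\phi_0$.

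It remains to identify the energy limits. By continuity, $\mathcal E_k=\langle\psi_k,H\psi_k\rangle\to E_0$. Since $E_k^{\mathrm{FN}}$ is sandwiched between consecutive physical energies, $E_k^{\mathrm{FN}}\to E_0$ as well, and monotonicity gives the downward convergence in Eq.~\eqref{eq:correct-sign-limit}.
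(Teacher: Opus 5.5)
Your proposal is correct and follows essentially the same route as the paper: compactness, then the vanishing slack $\mathcal E_{k_j}-E^{\mathrm{FN}}_{k_j}\to0$ fed into Lemma~\ref{lem:approximate-ground} to show every accumulation point is a full-support fixed point, then the positive-overlap argument to identify it with $\phi_0$. You are more explicit than the paper on three points it only asserts: sign preservation via Perron--Frobenius, the operator-norm convergence $F(\psi_{k_j})\to F(\psi_\star)$, and simplicity of $\lambda_0(F_\star)$ via the frozen violating-edge pattern.
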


\begin{proof}
Given that $\sgn(\psi_k)=\sgn(\phi_0)$ for every $k$ and the fact that $\psi_l$ lies on the compact unit sphere of a finite-dimensional state space, \(\{\psi_k\}\) has at least one convergent subsequence. Let \(\{\psi_{k_j}\}_{j\ge1}\) be any such subsequence, with \(\psi_{k_j}\to\psi_\infty\). The fixed off-diagonal graph in the chamber remains irreducible, so the limiting fixed-node ground state is simple. Set $A_j=F(\psi_{k_j})$ and $x_j=\psi_{k_j}$. Lemma~\ref{lem:Fv-Hv} and the vanishing energy slack established above give
\begin{equation}
 x_j^\mathsf{T}A_jx_j-\lambda_0(A_j)
 =\mathcal E_{k_j}-E_{k_j}^{\mathrm{FN}}\longrightarrow0.
\end{equation}
Lemma~\ref{lem:approximate-ground} therefore implies
\begin{equation}
 \psi_\infty=\phi_0(F(\psi_\infty)),
\end{equation}
so $\psi_\infty$ is a full-support fixed point. Lemma~\ref{lem:Fv-Hv} then yields
$H\psi_\infty=E_\infty\psi_\infty$. A distinct eigenvector of the symmetric matrix $H$ is orthogonal to $\phi_0$, but two full-support vectors with the same componentwise signs have a strictly positive inner product. Hence $\psi_\infty=\phi_0$. Because the argument applies to every convergent subsequence, $\phi_0$ is the unique accumulation point and the full sequence converges. The energy limits follow.
\end{proof}

\begin{corollary}[Overlap bound]
\label{cor:overlap-supp}
Let $\Delta_H=\lambda_1(H)-\lambda_0(H)>0$. For every normalized $\psi_k$,
\begin{equation}
 1-|\langle\phi_0,\psi_k\rangle|^2
 \le\frac{\mathcal E_k-E_0}{\Delta_H}.
 \label{eq:overlap-supp}
\end{equation}
Under the assumptions of Theorem~\ref{thm:correct-sign-supp}, the left-hand side vanishes.
\end{corollary}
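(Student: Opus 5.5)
The bound in Corollary~\ref{cor:overlap-supp} is the standard spectral-gap estimate, and I would prove it by expanding $\psi_k$ in the eigenbasis of $H$. Let $\{\phi_n\}$ be an orthonormal eigenbasis of the real symmetric matrix $H$ with eigenvalues $E_0<E_1\le E_2\le\cdots$. Write $\psi_k=\sum_n c_n\phi_n$ with $\sum_n c_n^2=1$, so that $|\langle\phi_0,\psi_k\rangle|^2=c_0^2$. Then
\begin{equation}
 \mathcal E_k-E_0=\sum_{n\ge1}c_n^2(E_n-E_0)\ge\Delta_H\sum_{n\ge1}c_n^2=\Delta_H\bigl(1-c_0^2\bigr).
\end{equation}
Dividing by $\Delta_H>0$ gives Eq.~\eqref{eq:overlap-supp}. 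No property of the iteration is used here: the inequality holds for every normalized vector. It is the same estimate that appears in the proof of Lemma~\ref{lem:approximate-ground}, with $A=H$.

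For the second claim I would invoke Theorem~\ref{thm:correct-sign-supp}. Under its hypotheses, $\mathcal E_k\downarrow E_0$. Since the right-hand side of Eq.~\eqref{eq:overlap-supp} tends to zero and the left-hand side is nonnegative, the left-hand side vanishes in the limit. Equivalently, one can use $\psi_k\to\phi_0$ directly, which gives $|\langle\phi_0,\psi_k\rangle|\to1$. The energy route is preferable because it also yields the quantitative rate $1-|\langle\phi_0,\psi_k\rangle|^2=O(\mathcal E_k-E_0)$.

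There is no real obstacle. The only points needing care are that the gap assumption $\Delta_H>0$ (nondegenerate ground state) is what makes division legitimate, and that the statement should be read as the left-hand side tending to zero as $k\to\infty$, not vanishing at any finite $k$.
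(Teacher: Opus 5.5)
Your proof is correct and is essentially the paper's argument. The paper writes $\psi_k=\alpha_k\phi_0+\sqrt{1-|\alpha_k|^2}\,w_k$ with $w_k\perp\phi_0$ and uses the gap to get $\mathcal E_k\ge E_0+(1-|\alpha_k|^2)\Delta_H$, which is your eigenbasis expansion in compressed form; the vanishing of the left-hand side then follows, as you say, from $\mathcal E_k\downarrow E_0$ in Theorem~\ref{thm:correct-sign-supp}.
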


\begin{proof}
We write $\psi_k=\alpha_k\phi_0+\sqrt{1-|\alpha_k|^2}\,w_k$, with $w_k\perp\phi_0$. The spectral gap gives
$\mathcal E_k\ge E_0+(1-|\alpha_k|^2)\Delta_H$.
\end{proof}

\section{Boundary fixed points and support collapse}
\label{sec:boundary-supp}
Let $\psi_\star$ have support $S=\supp(\psi_\star)\subsetneq V$ and complement $R=V\setminus S$. Hence, 
\begin{equation}
 H=\begin{pmatrix}H_{SS}&H_{SR}\\H_{RS}&H_{RR}\end{pmatrix}.
\end{equation}

In the weighted configuration graph of $H$, the set $S$ is the induced subgraph on the configurations whose amplitudes survive, and $H_{SS}$ is the corresponding principal-submatrix operator. The relation between zero coordinates and principal-submatrix eigenvectors is studied in Ref.~\cite{johnson1995principal}; eigenvalue interlacing describes how the spectrum changes under such vertex deletion \cite{haemers1995interlacing}.

\begin{theorem}[Reduced-support fixed points]
\label{thm:boundary-supp}
Suppose an iteration has a subsequence converging to $\psi_\star$. If the nonzero restriction $\psi_S$ is a fixed point of the reduced fixed-node iteration constructed from $H_{SS}$, then
\begin{equation}
 H_{SS}\psi_S=\lambda\psi_S,
 \qquad
 \lambda=\lambda_0(F_S(\psi_S)).
 \label{eq:HSS-eigenvector}
\end{equation}

\end{theorem}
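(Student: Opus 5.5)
The plan is to apply Lemma~\ref{lem:Fv-Hv} to the reduced problem on $S$ in place of $H$ on $V$. By definition, $F_S(\psi_S)$ is the fixed-node construction of Eq.~\eqref{eq:supp-fixed-node} with $H$ replaced by the principal submatrix $H_{SS}$. The guide is $\psi_S$, which has full support on $S$ because $S=\supp(\psi_\star)$. The proof of Lemma~\ref{lem:Fv-Hv} is a row-by-row bookkeeping argument. Each deleted off-diagonal term $(H_{SS})_{xy}v_y$ is restored by the diagonal compensation $(H_{SS})_{xy}(v_y/v_x)v_x$. That argument uses nothing about $H$ beyond real symmetry and full support of the guide on the index set. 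It therefore applies verbatim to give
\begin{equation}
 F_S(\psi_S)\psi_S=H_{SS}\psi_S .
\end{equation}

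Next I will use the fixed-point hypothesis. The statement that $\psi_S$ is a fixed point of the reduced map $\psi_S\mapsto\phi_0(F_S(\psi_S))$ means $\psi_S$ equals, up to the normalization and global sign convention, the ground state of $F_S(\psi_S)$. Hence $F_S(\psi_S)\psi_S=\lambda_0(F_S(\psi_S))\psi_S$. Combining this with the identity above gives $H_{SS}\psi_S=\lambda\psi_S$ with $\lambda=\lambda_0(F_S(\psi_S))$, which is Eq.~\eqref{eq:HSS-eigenvector}. The hypothesis that a subsequence converges to $\psi_\star$ only fixes which support $S$ is meant. It plays no role in the algebra beyond guaranteeing that $\psi_S$ is a well-defined nonzero vector with full support on $S$.

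The only point needing care is well-posedness of the reduced fixed-node operator. The ratios $v_y/v_x$ appearing in $F_S$ must involve only indices in $S$, where $\psi_S$ is nonzero. This holds because the construction uses $H_{SS}$ alone, so couplings into $R$ never enter. I will also note that $\phi_0(F_S(\psi_S))$ is assumed simple, so that ``fixed point'' is unambiguous; this is the simplicity assumption mentioned in the main text. As a remark, I would add that the zero extension $\psi_\star$ satisfies $H\psi_\star=(\lambda\psi_S,\,H_{RS}\psi_S)$. It is therefore an eigenvector of $H$ exactly when $H_{RS}\psi_S=0$, which is the claim quoted in the Letter.
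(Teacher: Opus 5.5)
Your proposal is correct and follows the paper's proof essentially verbatim: you combine the reduced self-consistency $F_S(\psi_S)\psi_S=\lambda_0(F_S(\psi_S))\psi_S$ with Lemma~\ref{lem:Fv-Hv} applied to $H_{SS}$, and your closing remark on the zero extension matches the paper's final block-matrix step. You are also right that the subsequence-convergence hypothesis plays no role in the algebra beyond specifying the support $S$.
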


\begin{proof}
Reduced self-consistency gives
$F_S(\psi_S)\psi_S=\lambda_0(F_S(\psi_S))\psi_S$. Applying Lemma~\ref{lem:Fv-Hv} to the restricted Hamiltonian gives Eq.~\eqref{eq:HSS-eigenvector}. Finally,
\begin{equation}
 H\begin{pmatrix}\psi_S\\0\end{pmatrix}
 =\begin{pmatrix}H_{SS}\psi_S\\H_{RS}\psi_S\end{pmatrix},
\end{equation}
which proves the equivalence.
\end{proof}

\subsection{Ground-state boundary repulsion}
\label{sec:ground-boundary-supp}
The following contradiction argument shows why a reduced-support fixed point in the ground-state chamber admits an energy-lowering perturbation toward larger support.
Let $H$ be real symmetric with a unique normalized full-support ground state $\phi_0$ and energy $E_0$. Set $s_x=\sgn(\phi_{0,x})$ and define the closed normalized sign chamber
\begin{equation}
 \mathcal C_s=\{u\in\mathbb R^{|V|}:\|u\|_2=1,\ s_xu_x\ge0\ \text{for all }x\in V\}.
\end{equation}
Let $b\in\mathcal C_s$ have proper support $S=\supp(b)\subsetneq V$, complement $R=V\setminus S$, and energy $E_b=\mathcal E(b)$. If $H_{SS}b_S=E_b b_S$, then some $x\in R$ satisfies
\begin{equation}
 \left.\frac{d}{dt}\mathcal E\!\left(
 \frac{b+t s_x\ket{x}}{\sqrt{1+t^2}}\right)\right|_{t=0^+}
 =2s_x(Hb)_x<0.
 \label{eq:ground-boundary-descent}
\end{equation}
Consequently, $\mathcal E$ has no constrained local minimum on the boundary of $\mathcal C_s$. This can be shown since, $b$ has a zero component and the unique ground state has full support, hence the variational principle gives $E_b>E_0$. Their common signs imply $\langle\phi_0,b\rangle>0$. For $x\in R$, the perturbed state in Eq.~\eqref{eq:ground-boundary-descent} is normalized and remains in $\mathcal C_s$ for $t\ge0$. Upon differentiating its Rayleigh energy at $t=0$ we get $2s_x(Hb)_x$. Suppose, for contradiction, that every such derivative is nonnegative. The support eigenvalue equation makes $(H-E_b)b$ vanish on $S$, while $b$ vanishes on $R$. Therefore,
\begin{align}
 0&>(E_0-E_b)\langle\phi_0,b\rangle\nonumber\\
  &=\langle\phi_0,(H-E_b)b\rangle\nonumber\\
  &=\sum_{x\in R}|\phi_{0,x}|\,s_x(Hb)_x\ge0,
 \label{eq:ground-boundary-contradiction}
\end{align}
which is impossible. Thus at least one missing component has the negative derivative in Eq.~\eqref{eq:ground-boundary-descent}. Finally, any constrained boundary local minimum is stationary under normalized variations within its support, so it must satisfy $H_{SS}b_S=\mathcal E(b)b_S$. The negative derivative excludes such a minimum.
\subsection{Directional stability of a single-component wall}
\label{sec:wall-reciprocity-supp}
Let $b$ be normalized with $b_r=0$ and nonzero components on $S=V\setminus\{r\}$, and assume $b_S$ is the simple ground state of $F_S(b_S)$. For the guiding wavefunction, $v=b+\sigma z\ket{r}$, where $z>0$ and $\sigma=\pm1$, define the positive incident weights
\begin{equation}
 \alpha_\sigma=\sum_{\substack{j\in S\\\sigma H_{rj}b_j>0}}\sigma H_{rj}b_j,
 \qquad
 \beta_\sigma=-\sum_{\substack{j\in S\\\sigma H_{rj}b_j<0}}\sigma H_{rj}b_j.
 \label{eq:wall-incident-weights}
\end{equation}
We call the wall regular when $\alpha_\sigma,\beta_\sigma>0$. The deleted edges give $F(v)_{rr}=H_{rr}+\alpha_\sigma/z$, while the surviving block tends to $F_S(b_S)$. If we eliminate the diverging $r$-diagonal entry we get an $O(z)$ perturbation of this reduced operator. Since the ground state is simple, we have $\widehat b_S=b_S+O(z)$ for $\widehat b=T(v)$, with a bounded eigenvalue $\lambda_z$. The $r$th row of the eigenvalue equation reads
\begin{equation}
 \left(H_{rr}+\frac{\alpha_\sigma}{z}-\lambda_z\right)\sigma\widehat b_r
 =-\sigma\sum_{\substack{j\in S\\\sigma H_{rj}b_j<0}}H_{rj}\widehat b_j
 =\beta_\sigma+O(z).
 \label{eq:wall-row-equation}
\end{equation}
Consequently,
\begin{equation}
 \sigma\widehat b_r=\kappa_\sigma z+O(z^2),
 \qquad \kappa_\sigma=\frac{\beta_\sigma}{\alpha_\sigma},
 \qquad \kappa_\sigma\kappa_{-\sigma}=1,
 \label{eq:wall-reciprocity}
\end{equation}
because changing $\sigma$ interchanges $\alpha_\sigma$ and $\beta_\sigma$. In the ground-state chamber, we have  $\sigma(Hb)_r=\alpha_\sigma-\beta_\sigma<0$; hence $\kappa_\sigma>1$. This transverse expansion applies to a regular wall with one missing component. Thus for several missing components, we establish an inward energy-lowering direction without asserting the reciprocal formula.
\section{Regularized numerical map}
\label{sec:regular}
The exact operator contains ratios $v_y/v_x$ and is undefined when a guiding wavefunction amplitude vanishes. For numerical work, we fix inherited signs $\sigma_x\in\{\pm1\}$ at zero and define
\begin{equation}
 \widehat{\sgn}_{\sigma_x}(a)=
 \begin{cases}
 \sgn(a),&a\ne0,\\
 \sigma_x,&a=0,
 \end{cases}
 \qquad
 g_\eps(\psi)_x=\widehat{\sgn}_{\sigma_x}(\psi_x)
 \sqrt{\psi_x^2+\eps^2}.
 \label{eq:floor-supp}
\end{equation}

Since $F(cv)=F(v)$, the floored guide need not be normalized before constructing $F$. Any plotted guiding wavefunction energy is understood as the Rayleigh quotient
$E_H[g]=g^\mathsf{T}Hg/(g^\mathsf{T}g)$. In Fig.~\ref{fig:triangular-supp} we use Eq.~\eqref{eq:floor-supp}. The exact variational comparison is $E_H[T(g)]\le E_0^{\rm FN}[g]\le E_H[g]$; replacing the last guide energy by $E_H[\psi]$ need not preserve the inequality when $g\ne\psi$. In the ground-state chain run all amplitudes stay above the threshold, so the two energy curves approach one another under the exact iteration, rather than coinciding by construction. For a fixed full-support $\psi$, uniformly away from the chamber boundary,
\begin{equation}
 g_\eps(\psi)_x
 =\psi_x+\frac{\eps^2}{2\psi_x}+O(\eps^4).
 \label{eq:floor-expansion}
\end{equation}
Thus $F(g_\eps(\psi))=F(\psi)+O(\eps^2)$. If the relevant fixed-node ground state is simple with a uniform gap and no component approaches zero, standard eigenvector perturbation theory further gives
\begin{equation}
 T_\eps(\psi):=\phi_0(F(g_\eps(\psi)))
 =T(\psi)+O(\eps^2).
 \label{eq:regularized-map-error}
\end{equation}
The floor defines a family of numerical maps rather than changing the exact theorems above. A useful support diagnostic, when both limits exist, is
\begin{equation}
 n_0(C)=\lim_{\eps\to0}\lim_{k\to\infty}
 \left|\left\{x:|\psi_{k,\eps,x}|\le C\eps\right\}\right|,
 \label{eq:nzero-supp}
\end{equation}
for a fixed threshold factor $C>0$. A stable nonzero value of $n_0$ indicates amplitudes pinned at the regularization scale.

\begin{figure}[tbp]
 \centering
 \includegraphics[width=0.94\textwidth]{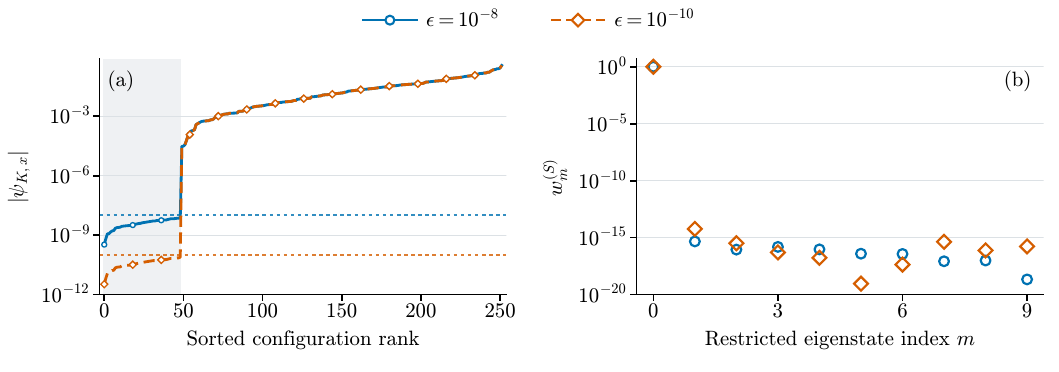}
 \caption{Reduced-support diagnostics after $K=1000$ updates for the first-excited-sign chain of Fig.~\ref{fig:numerics}, at $\eps=10^{-8}$ and $10^{-10}$. \textbf{(a)} Sorted terminal amplitudes. Horizontal dashed lines mark $\eps$; shaded ranks contain the $49$ configurations with $|\psi_{K,x}|\le\eps$. The surviving set $S=\{x:|\psi_{K,x}|>\eps\}$ contains the same $203$ configurations in both runs. \textbf{(b)} Weights $w_m^{(S)}=|\langle\varphi_m^{(S)},\widehat\psi_S\rangle|^2$ in the first ten eigenstates $\varphi_m^{(S)}$ of $H_{SS}$, ordered by increasing energy, where $\widehat\psi_S=\psi_{K,S}/\|\psi_{K,S}\|$.}
 \label{fig:HSS-supp}
\end{figure}

\section{Linearization of the self-consistent fixed-node map}
\label{sec:linearization-supp}
Let $\psi_\star$ be a normalized full-support fixed point with eigenvalue $E_\star$, assume the violating-edge pattern is constant nearby, and assume $E_\star$ is a simple ground-state eigenvalue of $F_\star=F(\psi_\star)$. The perturbations lie in the tangent space
\begin{equation}
 \mathcal T_{\psi_\star}=\{\eta:\langle\psi_\star,\eta\rangle=0\},
 \qquad
 P_\perp=I-|\psi_\star\rangle\langle\psi_\star|.
\end{equation}
Upon differentiating $F(\psi)\psi=H\psi$ at $\psi_\star$ we get:
\begin{equation}
 DF(\psi_\star)[\eta]\psi_\star
 =-(F_\star-H)\eta.
 \label{eq:derivative-identity}
\end{equation}

For a simple eigenvalue, first-order symmetric eigenvector perturbation theory gives
\begin{equation}
 DT_{\psi_\star}[\eta]
 =-(F_\star-E_\star)^+
 P_\perp DF(\psi_\star)[\eta]\psi_\star,
 \label{eq:pseudoinverse-derivative}
\end{equation}
where the pseudo-inverse acts on $\mathcal T_{\psi_\star}$. On that tangent space we can define:
\begin{equation}
 A=P_\perp(H-E_\star)P_\perp,
 \qquad
 B=P_\perp(F_\star-E_\star)P_\perp.
\end{equation}
Because $\psi_\star$ is a simple fixed-node ground state, $B\succ0$ and the restriction of $(F_\star-E_\star)^+$ is $B^{-1}$. Equations~\eqref{eq:derivative-identity} and \eqref{eq:pseudoinverse-derivative} yield
\begin{align}
 DT_{\psi_\star}
 &=B^{-1}P_\perp(F_\star-H)P_\perp\nonumber\\
 &=B^{-1}(B-A)
 =I-B^{-1}A.
 \label{eq:jacobian-supp}
\end{align}
Thus $\eta_{k+1}=DT_{\psi_\star}\eta_k+O(\|\eta_k\|^2)$. This is the one-vector self-consistent-field linearization of the eigenvector-dependent problem~\cite{cai2018eigenvector,bai2022sharp}.
\begin{theorem}[Ground-state local stability]
\label{thm:ground-stability-supp}
Assume the ground states of $H$ and $F(\phi_0)$ are simple and $\phi_0$ has full support. Then
\begin{equation}
 \rho(DT_{\phi_0})<1.
\end{equation}
\end{theorem}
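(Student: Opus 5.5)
We will read off the spectrum of the Jacobian from Eq.~\eqref{eq:jacobian-supp}, $DT_{\phi_0}=I-B^{-1}A$ on $\mathcal T_{\phi_0}$, with $\psi_\star=\phi_0$ and $E_\star=E_0$. Lemma~\ref{lem:Fv-Hv} gives $F(\phi_0)\phi_0=H\phi_0=E_0\phi_0$, so $E_0$ is an eigenvalue of $F_\star=F(\phi_0)$. Lemma~\ref{lem:loewner} gives $F_\star\succeq H$, and hence $\lambda_0(F_\star)\ge\lambda_0(H)=E_0$. Therefore $E_0$ is the lowest eigenvalue of $F_\star$ and $\phi_0$ is its ground state. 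By the simplicity hypothesis, $B=P_\perp(F_\star-E_0)P_\perp$ is positive definite on $\mathcal T_{\phi_0}$. This justifies the linearization of Sec.~\ref{sec:linearization-supp}, since full support keeps the violating-edge pattern locally constant.

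Next we establish $0\prec A\preceq B$ on the tangent space. The tangent space is the orthogonal complement of $\phi_0$, and $H$ has a simple ground state, so for tangent $\eta\neq0$ we have $\langle\eta,(H-E_0)\eta\rangle\ge\Delta_H\|\eta\|^2>0$. Thus $A\succ0$. Lemma~\ref{lem:loewner} compressed by $P_\perp$ gives $B-A=P_\perp(F_\star-H)P_\perp\succeq0$.

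We then pass to the symmetrized form. Since $B\succ0$, the map $B^{-1}A$ is similar to $B^{-1/2}AB^{-1/2}$, which is symmetric. Its eigenvalues are the generalized eigenvalues $\mu$ of $Au=\mu Bu$, and they are real. From $0\prec A\preceq B$ we get $0\prec B^{-1/2}AB^{-1/2}\preceq I$, so every $\mu\in(0,1]$. The eigenvalues of $DT_{\phi_0}$ are $1-\mu\in[0,1)$. The Jacobian is diagonalizable with real spectrum, so its spectral radius is the largest $|1-\mu|$. This is $1-\mu_{\min}<1$, because there are finitely many $\mu$ and all of them are strictly positive.

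The main point to watch is strictness. Strictness at the upper end, $\rho<1$, relies entirely on $\mu_{\min}>0$, and that comes from the gap $\Delta_H>0$ restricted to the tangent space. Nonnegativity of the Jacobian eigenvalues relies on $A\preceq B$, that is, on $F_\star\succeq H$ surviving the projection $P_\perp$. Both facts are immediate from the lemmas above. The only other step needing care is confirming that $\phi_0$ is in fact the ground state of $F(\phi_0)$, so that $B\succ0$. This was handled by the Loewner-order argument in the first paragraph.
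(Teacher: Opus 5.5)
Your proposal is correct and follows essentially the same route as the paper's proof. Both arguments use Lemmas~\ref{lem:Fv-Hv} and \ref{lem:loewner} to identify $\phi_0$ as the simple ground state of $F(\phi_0)$, then obtain $0\prec A\preceq B$ on the tangent space, and then pass to the congruent matrix $B^{-1/2}AB^{-1/2}$ to place the Jacobian eigenvalues in $[0,1)$. You additionally spell out the gap bound $\langle\eta,(H-E_0)\eta\rangle\ge\Delta_H\|\eta\|^2$ and the diagonalizability of $B^{-1}A$, both of which the paper leaves implicit.
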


\begin{proof}
Lemmas~\ref{lem:Fv-Hv} and \ref{lem:loewner} give
$F(\phi_0)\phi_0=E_0\phi_0$ and $F(\phi_0)\succeq H$, so $\phi_0$ is a ground state of $F(\phi_0)$ and hence a fixed point. By assumption it is simple.
On the tangent space at $\phi_0$,
$A=P_\perp(H-E_0)P_\perp\succ0$. Lemma~\ref{lem:loewner} gives
\begin{equation}
 B-A=P_\perp(F(\phi_0)-H)P_\perp\succeq0,
\end{equation}
so $0\prec A\preceq B$. The symmetric matrix
$C=B^{-1/2}AB^{-1/2}$ satisfies $0\prec C\preceq I$, and $B^{-1}A$ is similar to $C$. Every eigenvalue $\mu$ of $B^{-1}A$ lies in $(0,1]$, so the eigenvalues $1-\mu$ of Eq.~\eqref{eq:jacobian-supp} lie in $[0,1)$.
\end{proof}
\section{Morse index of excited-state fixed points}
\label{sec:morse-supp}

\begin{theorem}[Excited-state Morse index]
\label{thm:morse-supp}
Let $\phi_q$ be a full-support fixed point with nondegenerate energy $E_q$, and let $q$ be the number of lower eigenvalues of $H$, counted with multiplicity. Assume the violating-edge pattern is constant nearby and $\phi_q$ is a simple ground state of $F(\phi_q)$. Then $DT_{\phi_q}$ has exactly $q$ eigenvalues strictly larger than one; all remaining eigenvalues lie in $[0,1)$.
\end{theorem}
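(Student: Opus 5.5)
The plan is to reuse the linearization of Sec.~\ref{sec:linearization-supp} and turn the count of unstable directions into an inertia count via Sylvester's law. Under the stated hypotheses (constant violating-edge pattern near $\phi_q$, and $E_q$ a simple ground-state eigenvalue of $F_\star=F(\phi_q)$), Eq.~\eqref{eq:jacobian-supp} applies with $\psi_\star=\phi_q$ and $E_\star=E_q$. It gives
\begin{equation}
 DT_{\phi_q}=I-B^{-1}A
\end{equation}
on $\mathcal T_{\phi_q}$, where $A=P_\perp(H-E_q)P_\perp$ and $B=P_\perp(F_\star-E_q)P_\perp$. Simplicity of the fixed-node ground state gives $B\succ0$ on $\mathcal T_{\phi_q}$, and Lemma~\ref{lem:loewner} gives $B-A=P_\perp(F_\star-H)P_\perp\succeq0$, so $A\preceq B$.

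First I will diagonalize $A$ on the tangent space. Since $\phi_q$ is an eigenvector of the symmetric matrix $H$, the orthogonal complement $\mathcal T_{\phi_q}$ is spanned by the remaining orthonormal eigenvectors $\phi_n$, $n\ne q$. On that span $A$ acts diagonally with eigenvalues $E_n-E_q$. By the definition of $q$, exactly $q$ of these are negative. Because $E_q$ is nondegenerate, none is zero, so all the others are strictly positive and $A$ is nonsingular on $\mathcal T_{\phi_q}$.

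Next I will pass to the symmetric matrix $C=B^{-1/2}AB^{-1/2}$, which is similar to $B^{-1}A$ because $B^{-1}A=B^{-1/2}CB^{1/2}$. This makes the eigenvalues $\mu$ of $B^{-1}A$ real, and it shows that $B^{-1}A$ is diagonalizable, so counting eigenvalues of $DT_{\phi_q}$ is meaningful. Since $C$ is congruent to $A$, Sylvester's law of inertia shows $C$ has exactly $q$ negative eigenvalues, no zero eigenvalue, and the rest positive. From $A\preceq B$ we get $C\preceq I$, so every positive eigenvalue satisfies $\mu\in(0,1]$. Translating through $\lambda=1-\mu$, the $q$ negative values of $\mu$ give $q$ Jacobian eigenvalues $\lambda>1$, and the remaining ones give $\lambda\in[0,1)$. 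This is exactly the statement.

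The only delicate step is excluding $\mu=0$, which would produce a neutral eigenvalue $\lambda=1$. That requires $A$ to be nonsingular on the tangent space, and this is precisely where the nondegeneracy of $E_q$ and the identification $\mathcal T_{\phi_q}=\operatorname{span}\{\phi_n:n\ne q\}$ enter. Beyond that point the argument mirrors the proof of Theorem~\ref{thm:ground-stability-supp}, with positive definiteness of $A$ replaced by the inertia count.
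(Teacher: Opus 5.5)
Your proposal is correct and follows the paper's proof essentially line for line. Both use the linearization $DT_{\phi_q}=I-B^{-1}A$, apply Sylvester's law to $C=B^{-1/2}AB^{-1/2}$ to transfer the $q$ negative tangent-space eigenvalues $E_n-E_q$ of $A$, and obtain $C\preceq I$ from Lemma~\ref{lem:loewner}. You also use the nondegeneracy of $E_q$ explicitly to rule out $\mu=0$, and hence a neutral Jacobian eigenvalue, which spells out a step the paper leaves implicit.
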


\begin{proof}
Let
\begin{equation}
 A_q=P_\perp(H-E_q)P_\perp,
 \qquad
 B_q=P_\perp(F(\phi_q)-E_q)P_\perp.
\end{equation}
The matrix $B_q$ is positive definite. The congruent matrix
$C_q=B_q^{-1/2}A_qB_q^{-1/2}$ has the same inertia as $A_q$, by Sylvester's law. The tangent-space eigenvalues of $A_q$ are $E_j-E_q$, $j\ne q$, so exactly $q$ are negative. Moreover, Lemma~\ref{lem:loewner} gives $C_q\preceq I$. Each negative eigenvalue $\mu$ of $C_q$ produces a Jacobian eigenvalue $1-\mu>1$, whereas each positive eigenvalue lies in $(0,1]$ and produces a Jacobian eigenvalue in $[0,1)$.
\end{proof}

If the fixed-node construction disconnects the configuration graph, the fixed-node ground space can be degenerate. Then $B_q$ is singular and Eq.~\eqref{eq:jacobian-supp} cannot be used without separating directions inside and transverse to that ground space. Theorems~\ref{thm:ground-stability-supp} and \ref{thm:morse-supp} make no claim about this case.

\section{Additional numerical results}

We describe an exactly solvable benchmark for amplitude repair, and present additional numerical tests of support collapse and convergence.

\subsection{Exactly solvable non-stoquastic ring}
\label{sec:exact-solve}
Consider an even periodic ring with
\begin{equation}
 H_n=b\sum_{i=1}^n(I-X_i)
     -a\sum_{i=1}^n(I-X_iX_{i+1}),
 \qquad 0<a<\frac b2.
 \label{eq:solvable-ring-H}
\end{equation}
In the computational basis, a single-spin flip has matrix element $-b$, whereas an adjacent double flip has matrix element $+a$.  The product of the edge signs around the triangle formed by two consecutive single flips and the corresponding double flip is positive, so no diagonal $\{\pm1\}$ gauge can make every off-diagonal element nonpositive.  Nevertheless, the unique ground state is completely positive,
\begin{equation}
 \phi_0=\ket{+}^{\otimes n}
 =2^{-n/2}\sum_x\ket{x},
 \qquad E_0=0,
 \qquad \Delta_H=2(b-2a).
 \label{eq:solvable-ring-ground}
\end{equation}
Indeed, in the common $X$-eigenbasis a configuration with $N_-$ minus spins and $N_{\rm dw}$ domain walls has energy $2bN_--2aN_{\rm dw}\geq2(b-2a)N_-$.  Thus positivity of the final ground-state amplitudes does not remove the computational-basis projector sign problem: the short-time propagator assigns negative weight $-a\,\delta\tau$ to a double flip. To introduce a pure amplitude error with the exact signs, let $S=\{1,3,\ldots,n-1\}$,
\begin{equation}
 \chi(x)=(-1)^{\sum_{i\in S}x_i},
 \qquad
 v_\theta(x)=\frac{e^{\theta\chi(x)}}{\sqrt{2^n\cosh(2\theta)}}.
 \label{eq:solvable-ring-guide}
\end{equation}
The exact state is $v_0$, while $\theta\neq0$ biases the relative amplitudes of the two $\chi=\pm1$ sectors without changing any sign.  Fixed node deletes all $+a$ double-flip edges, retains the $-b$ single-flip edges, and adds the diagonal compensation
\begin{equation}
 F_\theta(x,x)=n(b-a)+an e^{-2\theta\chi(x)}.
 \label{eq:solvable-ring-F}
\end{equation}
Because $F_\theta$ depends on a configuration only through $\chi(x)$, its positive ground state lies in the same one-parameter family.  One exact iteration is therefore closed as
\begin{equation}
 T(v_\theta)=v_{g(\theta)},
 \qquad
 g(\theta)=\frac12\operatorname{arsinh}
 \!\left[\frac{2a}{b}\sinh(2\theta)\right].
 \label{eq:solvable-ring-map}
\end{equation}
If $c=2a/b<1$, we have
\begin{equation}
 |g'(\theta)|=
 \frac{c\cosh(2\theta)}{\sqrt{1+c^2\sinh^2(2\theta)}}<1,
 \qquad g'(0)=\frac{2a}{b}.
 \label{eq:solvable-ring-contraction}
\end{equation}
Hence $\theta=0$ is the unique fixed point and $\theta_k\to0$ geometrically under $\theta_{k+1}=g(\theta_k)$.  This gives an analytic example in which the original Hamiltonian is sign-frustrated in the sampling basis, yet exact fixed-node self-consistency reconstructs the uniform ground-state amplitudes from their positive sign pattern. Fig.~\ref{fig:solvable-ring-iteration} shows the map and its derivative for three coupling ratios, together with the monotonic decrease of $|\theta_k|$ from randomly chosen initial values within this guiding wavefunction family.

\begin{figure}[tbp]
 \centering
 \includegraphics[width=\textwidth]{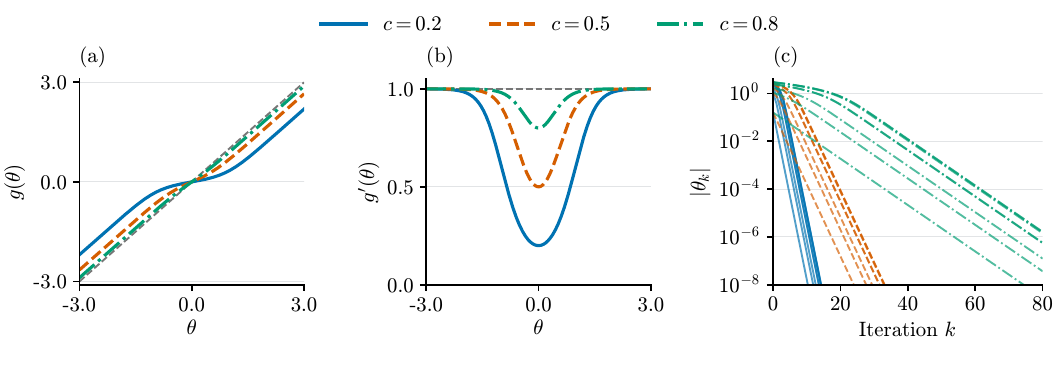}
 \caption{Exact amplitude iteration for the non-stoquastic ring, with $c=2a/b=0.2,0.5,0.8$. \textbf{(a)} The update $g(\theta)$ in Eq.~\eqref{eq:solvable-ring-map}; the dashed gray line is $g(\theta)=\theta$. \textbf{(b)} The derivative in Eq.~\eqref{eq:solvable-ring-contraction}, which remains below the dashed gray reference $g'=1$ for every finite $\theta$. \textbf{(c)} Iteration histories for eight initial values drawn uniformly from $[-3,3]$, with the same values used for each $c$. Each trajectory preserves its sign while $|\theta_k|$ decreases monotonically. The logarithmic vertical scale resolves the asymptotic geometric decay, $|\theta_{k+1}|/|\theta_k|\to c$. Colors and line styles denote the same coupling ratios in all panels.}
 \label{fig:solvable-ring-iteration}

\end{figure}

\begin{figure}[t]
  \centering
  \begin{subfigure}[t]{0.28\textwidth}
    \caption{}\label{fig:triangular-ground-energy}
    \includegraphics[width=\linewidth]{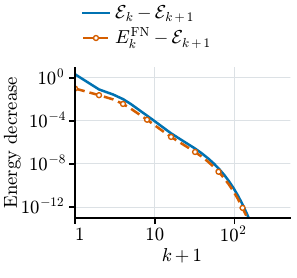}
  \end{subfigure}\hfill
  \begin{subfigure}[t]{0.21\textwidth}
    \caption{}\label{fig:triangular-ground-weights}
    \includegraphics[width=\linewidth]{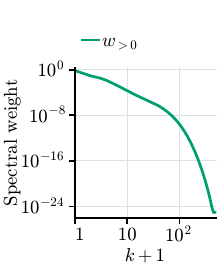}
  \end{subfigure}\hfill
  \begin{subfigure}[t]{0.28\textwidth}
    \caption{}\label{fig:triangular-excited-energy}
    \includegraphics[width=\linewidth]{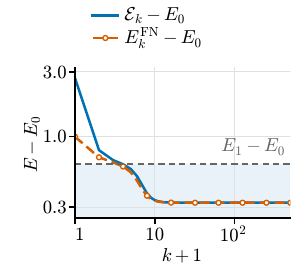}
  \end{subfigure}\hfill
  \begin{subfigure}[t]{0.21\textwidth}
    \caption{}\label{fig:triangular-excited-weights}
    \includegraphics[width=\linewidth]{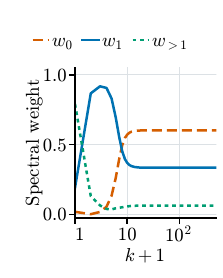}
  \end{subfigure}
  \caption{Fixed-node iteration on an open $4\times2$ triangular strip: \textbf{(a,b)} ground-state signs; \textbf{(c,d)} first-excited-state signs. Panel \textbf{(a)} shows $\mathcal{E}_k-\mathcal{E}_{k+1}$ and $E_k^{\rm FN}-\mathcal{E}_{k+1}$; panel \textbf{(c)} shows $\mathcal{E}_k-E_0$ and $E_k^{\rm FN}-E_0$, with the dashed reference $E_1-E_0$ and shading below it. The physical energy crosses below $E_1$ at $k=3$ and finishes approximately $0.306$ below it. Panels \textbf{(b,d)} show spectral weights, using the colors and notation of Fig.~\ref{fig:numerics}. The decay of $w_{>0}$ in \textbf{(b)} shows $\psi_k \rightarrow \phi_0$, while growing $w_0$ in \textbf{(d)} reveals the lower-energy eigenstates gain finite weight. Horizontal axes, energy axes, and the weight axis in \textbf{(b)} are logarithmic; the weight axis in \textbf{(d)} is linear. Energy differences at or below $10^{-13}$ are omitted in \textbf{(a)}. Both runs use the square-root guide with $\eps=10^{-8}$.}
  \label{fig:triangular-supp}
\end{figure}

\begin{figure}[tbp]
 \centering
 \begin{minipage}[t]{0.49\textwidth}
  \centering
  \textbf{(a)}\\[-2pt]
  \includegraphics[width=\linewidth]{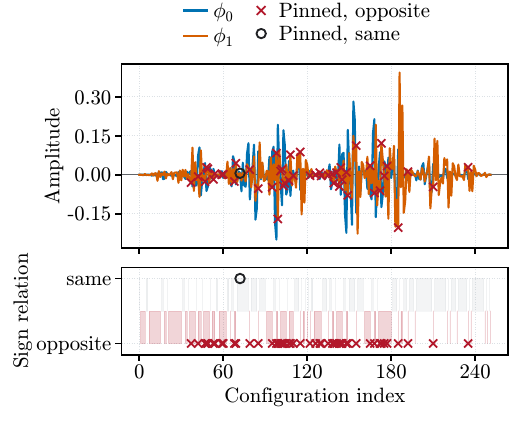}
 \end{minipage}
 \hfill
 \begin{minipage}[t]{0.49\textwidth}
  \centering
  \textbf{(b)}\\[-2pt]
  \includegraphics[width=\linewidth]{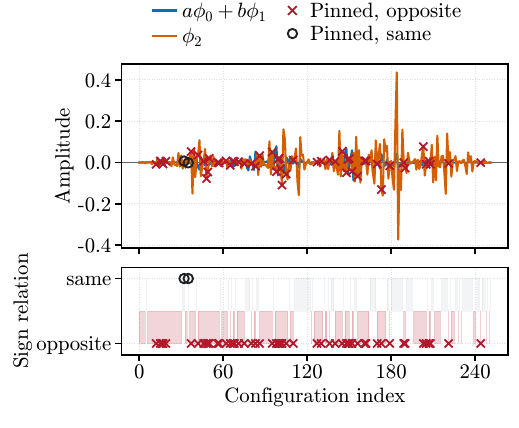}
 \end{minipage}

 \caption{Sign alignment for the disordered $L=10$ $J_1$-$J_2$ chain with $J_1=\Delta=1$, $J_2=0.78$, and $h_{\rm dis}=0.1$. Upper panels show amplitudes versus configuration index: \textbf{(a)} the ground state $\phi_0$ and first excited state $\phi_1$; \textbf{(b)} the second excited state $\phi_2$ and the lower-state combination $\chi=a\phi_0+b\phi_1$, with $a\simeq-0.31$ and $b\simeq0.01$ fitted by least squares to $-\phi_2$ on the pinned configurations. Red crosses and open black circles mark configurations with $|\psi_{1000,x}|\le\eps=10^{-10}$ where the two amplitudes have opposite and equal signs, respectively. The lower strips show the sign relation for all $252$ configurations and repeat these markers for the pinned set.}
 \label{fig:alignment-supp}
\end{figure}

\subsection{Numerics for 2D Hamiltonian}

Fig.\ref{fig:triangular-supp} provides a complementary test on a two-dimensional frustrated geometry: an open $4\times2$ triangular strip with eight spins, restricted to the $S^z_{\rm tot}=0$ sector of dimension $70$.  We take $J=\Delta=1$, add a weak longitudinal random field with $h_{\rm dis}=0.12$ to remove accidental degeneracies, and initialize random amplitudes with the exact signs of the chosen target; both runs use $\eps=10^{-8}$.  In panels (a,b), the ground-state-sign trajectory converges to $E_0=-3.704\ldots$ and its ground-state weight approaches unity.  In panels (c,d), the first-excited-state weight initially increases but the unstable lower-energy direction subsequently develops: after $500$ iterations the ground-, target-, and higher-state weights are approximately $0.603$, $0.334$, and $0.063$, respectively, while the energy $-3.38\ldots$ lies below $E_1=-3.075\ldots$.  Fifteen of the $70$ amplitudes are pinned at the regularization scale, consistently with support collapse.  Thus the chamber-dependent behavior seen in one dimension persists on this frustrated two-dimensional cluster, although this single size and disorder realization is not a finite-size-scaling claim. Figure~\ref{fig:alignment-supp} examines whether the amplitudes driven to the regularization scale are associated with sign conflicts with lower-energy states.  In the first-excited-state run, $48$ of the $49$ pinned configurations have opposite signs in $\phi_1$ and $\phi_0$.  In the second-excited-state run, a post hoc least-squares vector $\chi\in\operatorname{span}\{\phi_0,\phi_1\}$ has the opposite sign to $\phi_2$ on $53$ of the $55$ pinned configurations.  These finite-system diagnostics are consistent with sign-selective support collapse rather than indiscriminate amplitude suppression; they do not by themselves establish an exact reduced-support fixed point or a general convergence mechanism.

\section{Sign and amplitude reconstruction}
\label{sec:glassysign}
For $n$ qubits the real configuration space has dimension $2^n$. A full-support state therefore has $2^{2^n}$ componentwise sign assignments, or $2^{2^n-1}$ after identifying one global sign. Westerhout \textit{et al.} provide an amplitude-to-sign map that is complementary to our sign-to-amplitude dynamics~\cite{westerhout2023many}. Given normalized amplitudes $a_x\geq0$ on a connected $K$-configuration subgraph $\mathcal C$, they associate the sign variables $s_x\in\{\pm1\}$ with the auxiliary Ising objective
\begin{equation}
 \begin{aligned}
 \mathcal H_{\rm sgn}(s;a)
   &=\sum_{\{x,y\}\in E_{\mathcal C}}H_{xy}a_xa_y s_xs_y,\\
 \mathcal G_H(a)&\approx\argmin_s\mathcal H_{\rm sgn}(s;a).
 \end{aligned}
 \label{eq:sign-ising-map}
\end{equation}
Their edge-ordered greedy map $\mathcal G_H$ recovers exact or high-overlap signs, even with slightly-noisy input amplitudes, in $O(K\log K)$ time. Here $K$ is the size of the auxillary Ising graph optimizied by the greedy algorithm; keeping $K$ polynomial requires sampling a configuration cluster. For $n$-spins their second-order-extension variational monte carlo proposal has an estimated sign-update cost $O(n^4\log n)$. We propose combining their sign update with $m$ fixed-node iterations on the same configuration set to update the amplitudes:
\begin{equation}
 \begin{aligned}
 s^{(r+1)}&=\mathcal G_H(a^{(r)}),\\
 \psi^{(r,0)}&=s^{(r+1)}\odot a^{(r)},\\
 a^{(r+1)}&=\left|T^m\!\left(\psi^{(r,0)}\right)\right|.
 \end{aligned}
 \label{eq:sign-amplitude-bootstrap}
\end{equation}
where $\odot$ denotes componentwise multiplication and $m$ is the number of fixed-node updates between sign reconstructions. If the sign step returns $s^{(r+1)}=\sgn(\phi_0)$, our convergence result makes the amplitude update almost exact for large $m$, under the hypotheses stated above. Fig.\ref{fig:reconstruction-supp} shows the amplitude error after $m$ iterations when some initial signs are incorrect. In the overlap errors $e_{\mathcal A}=1-O^{\mathcal A}$ and $e_{\mathcal S}=1-O^{\mathcal S}$, the robustness test of Ref.~\cite{westerhout2023many} probes an amplitude-to-sign response $e_{\mathcal S}\lesssim f(e_{\mathcal A})$, whereas Fig.~\ref{fig:reconstruction-supp} probes the reverse response $e'_{\mathcal A}\approx g(e_{\mathcal S})$. A direct test of a self-correcting regime would place both maps on the same model and ensemble and verify $g(f(e))<e$ over a nonzero interval. Neither work yet proves this contraction or the efficiency of the composed stochastic, sampled iteration. In particular, the $O(K\log K)$ bound applies only to the greedy sign subroutine.

\begin{figure}[tbp]
 \centering
 \includegraphics[width=0.8\textwidth]{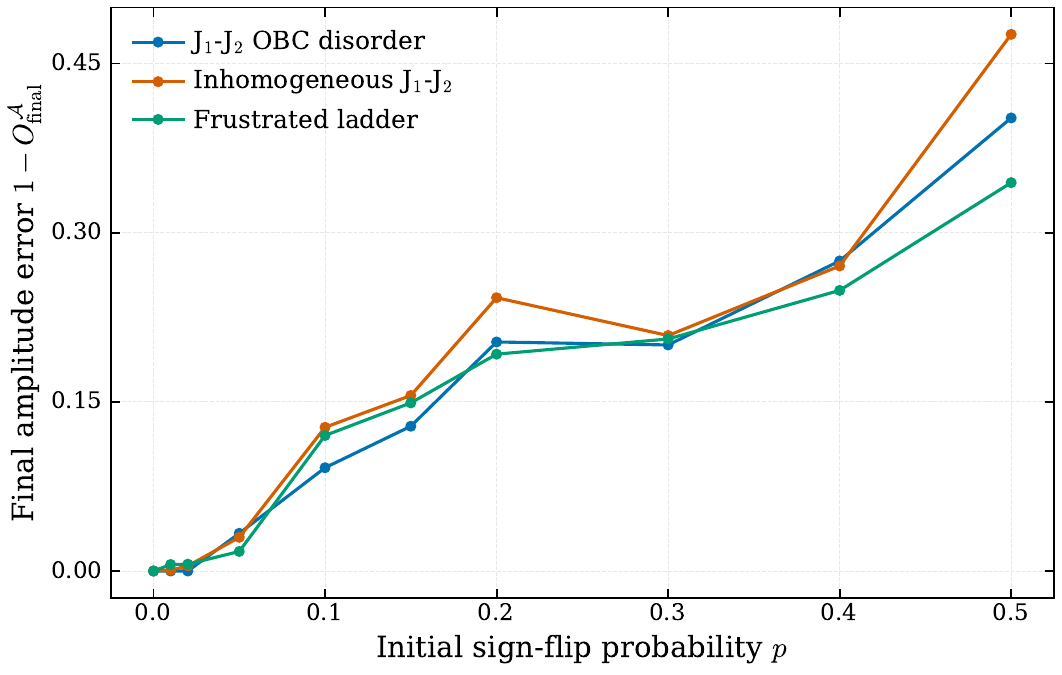}
 \caption{Finite-step sign-to-amplitude robustness for three $L=10$ ground-state problems. Starting from a fixed random positive amplitude vector, each exact ground-state sign is independently reversed with probability $p$, and the state is evolved for $m=500$ deterministic fixed-node iterations. For the weighted sign overlap $O^{\mathcal S}_{\rm in}=\sum_x|\phi_{0,x}|^2s_x^{\rm exact}s_x^{\rm in}$, this protocol gives $\mathbb E[1-O^{\mathcal S}_{\rm in}]=2p$. The plotted error is $1-O^{\mathcal A}_{\rm final}$, where $O^{\mathcal A}_{\rm final}=\sum_x|\phi_{0,x}|\,|\psi_{m,x}|$ for normalized states. At $p=0$ the error reaches the numerical convergence floor, as predicted by exact amplitude reconstruction in the correct sign chamber; increasing $p$ probes imperfect output from a putative sign-reconstruction step. Curves are single seeded scans (not ensemble averages) for the disordered open $J_1$-$J_2$, inhomogeneous $J_1$-$J_2$, and frustrated-ladder models, so no general contraction claim for the alternating scheme is inferred.}
 \label{fig:reconstruction-supp}
\end{figure}
\end{document}